\documentclass{IEEEtran}

\usepackage{amsmath}
\usepackage{amssymb}
\usepackage{amsfonts}
\usepackage{mathtools}
\usepackage{mathrsfs}
\usepackage{graphicx}
\usepackage{textcomp}
\usepackage{xcolor}
\usepackage{booktabs}
\usepackage{bm}

\usepackage{amsthm}
\newtheorem{theorem}{Theorem}

\newtheorem{definition}{Definition}
\newtheorem{assumption}{Assumption}
\newtheorem{proposition}{Proposition}
\newtheorem{corollary}{Corollary}

\usepackage[colorlinks=true,
            linkcolor=cyan,
            urlcolor=cyan,
            citecolor=cyan,
            anchorcolor=cyan]{hyperref}

\begin{document}

\title{
Performance Limits and Tradeoffs of Power Systems Synchronization Recovery in Complex-Frequency Representation
}

\author{Xiaoyu Peng, Zelin Sun, Zhongze Li, Feng Liu
\thanks{(\textit{Corresponding author: Feng Liu})}
\thanks{Xiaoyu Peng and Zhongze Li are with the Department of Electrical Engineering, Tsinghua University, Beijing, 100084, China.  (email: pengxy19@tsinghua.org.cn; lizz24@mails.tsinghua.edu.cn).}
\thanks{Zelin Sun is with the Department of Automation, Tsinghua University, Beijing, 100084, China.  (email: szl24@mails.tsinghua.edu.cn).}
\thanks{Feng Liu is with the Department of Electrical Engineering, Tsinghua University, Beijing, 100084, China, and also with the State Key Laboratory of Power System Operation and Control, Tsinghua University, 100084, China. (email: lfeng@tsinghua.edu.cn)}
}
\maketitle

\providecommand{\rev}[1]{\textcolor{blue}{#1}}

\begin{abstract}
Power-system synchronization studies mainly ask \textit{whether} synchronization can be achieved, whereas a complementary question is \textit{how well} post-disturbance synchronization recovery can be shaped. In inverter-dominated systems, this recovery involves coupled phase-angle and voltage-magnitude dynamics, for which complex frequency provides a unified representation. This paper establishes a conservation law for synchronization recovery in this representation. For a given system and its initial state, the logarithmic integral of the proposed recovery residual is fixed by the initial contraction rate and the right-half-plane (RHP) zero--pole structure. Hence, synchronization recovery performance cannot be improved uniformly over frequencies and variables unless the budget set by these terms changes; controller tuning that preserves them can only redistribute the residual response across frequencies. The conservation law applies to both linear and nonlinear recovery responses satisfying the mild analyticity conditions. It further yields finite-frequency peak--bandwidth limits and a lower bound on the time-domain integral of squared complex frequency recovery. A system-level extension recovers classical Bode-type sensitivity conservation as an LTI special case. Studies on a heterogeneous IEEE 39-bus system verify the theory. 
\end{abstract}

\begin{IEEEkeywords}
Complex frequency, fundamental limitation, synchronization, waterbed effect.
\end{IEEEkeywords}

\section{Introduction}\label{sec: introduction}

\subsection{Motivation and Background}

{The growing penetration of inverter-based resources has strengthened the coupling between phase angle and voltage dynamics \cite{Peng_Impact_2025}. Due to their interactions, disturbances may excite synchronization responses that phase angle or frequency alone cannot fully describe \cite{Gutierrez-Florensa_Quasi_2026}. Accordingly, post-disturbance synchronization recovery must account for the coupled evolution of phase angle and voltage magnitude. Complex frequency \cite{Milano_Complex_2022}, which combines the normalized voltage rate and angular frequency, provides a unified representation of this coupled recovery. Existing complex-frequency studies mainly characterize synchronization conditions, propagation, and control, whereas convergence alone does not determine synchronization recovery performance. For example, stronger damping may suppress oscillatory excursions while leaving a slower residual recovery; more generally, improvement over one frequency range may be accompanied by degradation elsewhere \cite{Peng_Stability_2026}. This motivates the central question:}

{\emph{\textbf{For a given system and initial state, how well can its synchronization recovery performance be shaped?}}}

{Answering this question requires distinguishing whether controls change the overall budget on synchronization recovery performance, or merely redistribute it among frequencies or variables. This paper distinguishes them by establishing a conservation law for the complex-angle recovery determined by the system and its initial state.}

\begin{figure*}[htpb]
\centering
\includegraphics[width=0.9\textwidth]{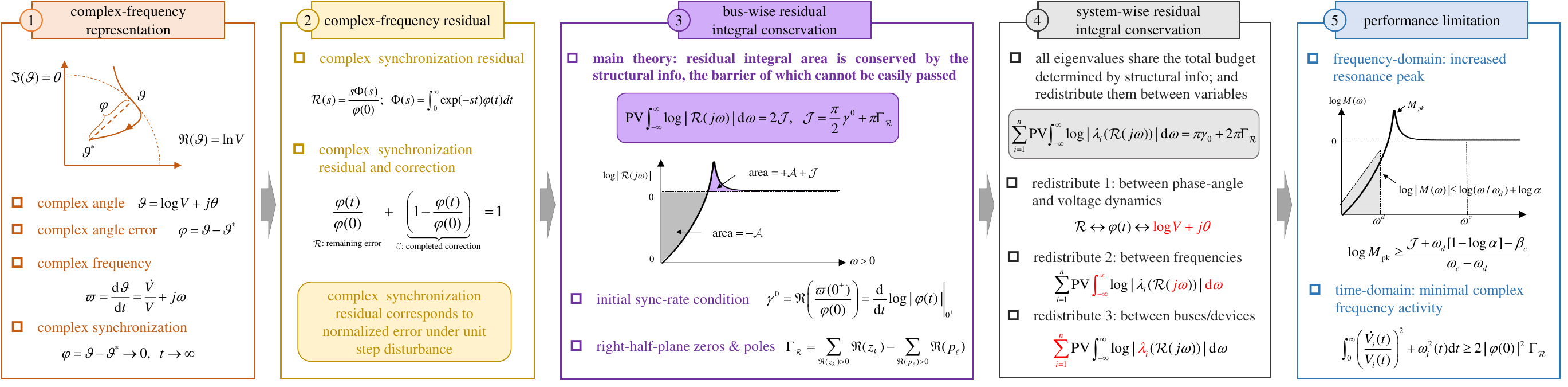}
\caption{{Overview of the proposed framework and its power-system implications. The system model and initial state set the synchronization recovery response and its residual. Then, the logarithmic recovery budget is determined by the initial contraction rate and the RHP-zero burden from dynamic cancellation. The conservation law yields finite-frequency peak and bandwidth limits and a lower bound on time-domain complex-frequency activity. Different initial states and observation buses generally have different budgets, but each obeys the same conservation law.}}
\label{fig: framework overview}
\end{figure*}

\subsection{Literature Review}

{Answering the above question requires two ingredients: a representation of coupled angle--voltage synchronization recovery and a theory that can distinguish a change in the overall recovery limitation from redistribution of recovery performance. The most relevant literature follows two lines, the complex-frequency representations of coupled synchronization and Bode-type fundamental performance limits.}

{The first line of the complex-frequency representation starts from the spatial variation of frequency during power-system transients. The frequency-divider formulation links local bus frequencies to network coupling and device dynamics \cite{Milano_Frequency_2017}, motivating the bus-level observation adopted here. Complex frequency extends this view by combining angular frequency and normalized voltage rate and linking them to complex-power and current dynamics \cite{Milano_Complex_2022}. Later studies clarified its geometry \cite{Milano_Geometrical_2022, Milano_Equivalence_2024, Sofos_Complex_2026} and developed a complex-frequency divider for network propagation \cite{Ponce_Complex_2024}. Others distinguished bus and internal-device complex frequencies \cite{Moutevelis_Taxonomy_2024} and gave modal interpretations of local responses \cite{Moutevelis_Modal_2024}. These works establish complex frequency as a meaningful representation of coupled angle--voltage dynamics. Their focus, however, is on how complex frequency is generated, propagated, or decomposed rather than on fundamental limits of synchronization recovery performance.}

{Related studies address coupled phase--voltage synchronization and control. Oscillator-based controls provide stability guarantees for the angle--voltage joint evolution of dispatchable virtual oscillator controls (dVOC) \cite{Colombino_Global_2019}. In the complex-frequency formulation, dVOC synchronization can be interpreted through complex droop \cite{He_Complex_2025}. Local synchronization \cite{Ponce_Local_2025} and augmented synchronization \cite{Yang_Augmented_2023} extend this analysis to device terminals and heterogeneous bus dynamics. These ideas have also been used to improve inverter transients \cite{DomingoEnrich_Dynamic_2024, Bernal_Complex_2025}. These works provide definitions, stability conditions, metrics, and control methods. They mainly address whether synchronization is achieved and how to shape a desired response, rather than whether synchronization recovery performance can be improved uniformly over all frequencies.}

The second line concerns fundamental performance limits. Classical Bode-type sensitivity integrals show that, under suitable pole-zero conditions, better disturbance rejection in one frequency range must be compensated elsewhere \cite{Seron_Fundamental_1997}. Our previous work \cite{Peng_Stability_2026} first introduced this tool into power systems and established sensitivity conservation and breaking laws. It showed that angle-voltage coupling redistributes a system-level disturbance rejection budget across physical variables, devices, and frequencies. Improvement in one component may therefore degrade another. Ref.~\cite{Hong_Waterbed_2026} connected this waterbed mechanism to inverter synchronization and explained the trade-off between response speed and stability margin. The limitation can be relaxed only by changing the loop structure or pole-zero conditions. 

{These studies share the same performance-limitation viewpoint but use different analytical objects. Existing Bode-type studies consider fixed feedback interconnections and prescribed disturbance-rejection channels. The post-disturbance synchronization recovery considered here, by contrast, is a free response determined by the system dynamics and the disturbance-induced initial state. Thus, the object to be constrained is not a prescribed disturbance-rejection transfer function, but the remaining synchronization error itself. Existing Bode-type conservation results are also mainly formulated for linear systems. Meanwhile, complex-frequency studies provide the representation and synchronization mechanisms, but not an exact conservation law for this recovery performance. This paper connects the two lines by constructing a synchronization residual directly from the recovery response and relating its integral to the initial contraction tendency and non-minimum-phase dynamics. Under linearization and additional assumptions, the result reduces to \cite{Peng_Stability_2026}.}

\subsection{Main Contribution}

 {Building on complex-frequency and related phase--magnitude synchronization theories, this work studies the structural synchronization recovery performance following a disturbance. The main contributions are summarized as follows.}

   
    {1) \textbf{A performance formulation is constructed for characterizing complex synchronization recovery.} Building on existing complex-frequency theory, the proposed residual characterizes the remaining coupled angle--voltage synchronization error after disturbances. Unlike existing analyses centered on stability, modal behavior, or controller-specific performance metrics, this formulation enables direct analysis of synchronization recovery performance and its fundamental limits.}

    {2) \textbf{A conservation law is established that reveals a fundamental limit of synchronization recovery performance.} Unlike classical Bode-type results formulated for predefined feedback input--output channels, the proposed conservation law applies directly to disturbance-induced synchronization recovery. It shows that recovery performance obeys a logarithmic budget determined by the initial contraction tendency and the non-minimum-phase structure. This converts otherwise case-dependent transient behavior into a quantitative law. Computation methods for both linear and nonlinear systems are derived.}

    {3) \textbf{Quantitative performance tradeoffs are exposed with control significance.} Existing controller-tuning studies typically observe compromises among damping, response speed, and transient excursions without distinguishing their underlying causes. The proposed theory separates changes in the recovery budget from redistribution of recovery performance within a fixed budget, and derives corresponding peak--bandwidth and complex-frequency-activity limits. This distinction clarifies when control changes the underlying recovery limitation and when it only relocates performance degradation, providing a principled basis for synchronization-oriented controller design.}

\subsection{Paper Organization}

Section~\ref{sec: complex frequency synchronization} defines complex-angle synchronization and the residual. Section~\ref{sec: residual integral} establishes the Bode-type conservation law. Section~\ref{sec: pole zero interpretation} interprets residual zeros and poles and gives computations for linear and nonlinear systems. Section~\ref{sec: limitations} derives frequency- and time-domain limits. Section~\ref{sec: extension to system level} extends the law to multiple buses and recovers earlier Bode-type sensitivity conservation as a special case. Section~\ref{sec: ieee39 validation} presents numerical validation, and Section~\ref{sec: conclusion} concludes the paper. Fig.~\ref{fig: framework overview} summarizes the framework. 

\textbf{Notation}: The real and complex number sets are $\mathbb R$ and $\mathbb C$, with $\mathbb C_+=\{s\in\mathbb C:\Re(s)>0\}$. The real and imaginary parts of $c\in\mathbb C$ are $\Re(c)$ and $\Im(c)$. LHP and RHP denote the left- and right-half planes, and $j$ is the imaginary unit. Symmetric Cauchy principal-value integration is denoted by $\operatorname{PV}\int$. The $L^p$ space contains functions with finite norm $\Vert f\Vert_p=(\int |f(x)|^p\mathrm dx)^{1/p}$.
The $C^p$ space contains functions with $p$ continuous derivatives.
The eigenvalue of a complex matrix $M$ is denoted by $\lambda(M)$.

\section{Power-System Synchronization in Complex Frequency Representation}\label{sec: complex frequency synchronization}

\subsection{General Power-System Model}

Consider a general power-system model described by the differential-algebraic equations
\begin{equation}\label{equ: dae model}
    \dot x=f(x,y),\qquad 0=g(x,y),
\end{equation}
where $x$ and $y$ collect the dynamic states and algebraic variables. The model includes the network and heterogeneous devices in inverter-dominated systems. The bus voltage is $v_i=h_i(x,y)\in\mathbb C$. In the synchronous frame, write $v_i(t)=V_i(t)\exp(j\theta_i(t))$, where $V_i(t)>0$ and $\theta_i(t)$ is the phase angle relative to the frame rotating at $\omega_s=100\pi$ rad/s. The angular-frequency deviation is $\omega_i(t)=\dot\theta_i(t)$.

\textbf{Applicability}:
{The analysis is not tied to a specific electromechanical, electromagnetic, linear, or nonlinear model. Under the well-posedness condition below, the system and initial state uniquely determine the synchronization response and, therefore, the recovery budget introduced below. The result is therefore universal in form.}

\begin{assumption}[Well-Posed System Model]\label{assump: trajectory}
For the considered initial state $(x_0,y_0)$ satisfying $g(x_0,y_0)=0$, \eqref{equ: dae model} admits a unique non-singular forward solution.
\end{assumption}

\subsection{Complex Angle, Complex Frequency, and Synchronization}

\begin{definition}[Complex Angle and Complex Frequency \cite{Milano_Complex_2022}]\label{def: complex angle frequency}
The complex angle $\vartheta_i$ and complex frequency $\varpi_i$ at bus $i$ are respectively defined as
\begin{equation}\label{equ: complex frequency}
\begin{aligned}
    \vartheta_i(t)&=\log V_i(t)+j\theta_i(t),\\
    \varpi_i(t)&=\dot\vartheta_i(t)=\frac{\dot V_i(t)}{V_i(t)}+j\omega_i(t)=\frac{\dot v_i(t)}{v_i(t)}.
\end{aligned}
\end{equation}
\end{definition}

Here, $\Re(\varpi_i)$ is the normalized voltage rate and $\Im(\varpi_i)$ is the angular-frequency deviation from $\omega_s$. Complex synchronization requires both phase angle and voltage to converge to their post-disturbance equilibrium.

\begin{definition}[Complex Synchronization]\label{def: synchronization}
The $i$-th bus achieves synchronization if there exists a constant $\vartheta_i^*\in\mathbb C$ such that
\begin{equation}\label{equ: complex angle synchronization}
    \vartheta_i(t)\rightarrow\vartheta_i^*,
    \qquad t\rightarrow\infty.
\end{equation}
Define the complex-angle synchronization error as
\begin{equation}\label{equ: synchronization error}
{\varphi_i(t)=\vartheta_i(t)-\vartheta_i^*, \quad \dot\varphi_i(t)=\varpi_i(t).}
\end{equation}
Equivalently, complex synchronization requires $\varphi_i\rightarrow0$.
\end{definition}

\subsection{Complex Synchronization Residual}

Set the fault-clearing time to $t=0$, with $\varphi(0)\neq0$. Under Assumption~\ref{assump: trajectory}, the system model, initial state, and selected bus uniquely determine $\varphi_i(t)$. We construct an analytic object from this response and omit the bus subscript $i$ for brevity.

\begin{assumption}[Synchronized Response]\label{assump: integrable trajectory}
The complex synchronization error satisfies $\varphi\in L^1\cap C^1$, $\varphi(t)\rightarrow0$, $\varpi=\dot\varphi\in L^1$, and $\varphi(0)\neq0$.
\end{assumption}

\begin{definition}[Complex Synchronization Residual]\label{def: residual function}
The Laplace transform of complex-angle synchronization error is
\begin{equation}\label{equ: Phi definition}
\boxed{\Phi(s)=\int_0^\infty \exp(-st)\cdot\varphi(t){\rm d}t.}
\end{equation}
The corresponding complex synchronization residual is
\begin{equation}\label{equ: residual definition}
    \boxed{\mathcal R(s)=\frac{s\Phi(s)}{\varphi(0)}.}
\end{equation}
\end{definition}
The system model, initial state, and observation bus uniquely determine $\mathcal R(s)$. It is therefore a system- and initial-condition-dependent quantity, not an arbitrary response descriptor. Its unit-step response also has a direct physical interpretation as follows.
\begin{proposition}[Residual-Correction Representation]\label{prop: residual frequency representation}
Under Assumption~\ref{assump: integrable trajectory}, define the transform of the complex frequency as
\begin{equation}\label{equ: Omega definition}
    \Omega(s)=\int_0^\infty \exp(-st)\cdot\varpi(t){\rm d}t,
\end{equation}
and the normalized correction function
\begin{equation}\label{equ: correction function}
    \mathcal C(s)=-\frac{\Omega(s)}{\varphi(0)}=1-\mathcal R(s).
\end{equation}
Their unit-step responses satisfy
\begin{equation}\label{equ: residual and correction step response}
    \mathcal L^{-1}\!\left\{\frac{\mathcal R(s)}{s}\right\}
    =\frac{\varphi(t)}{\varphi(0)}, \quad 
    \mathcal L^{-1}\!\left\{\frac{\mathcal C(s)}{s}\right\}
    =1-\frac{\varphi(t)}{\varphi(0)}.
\end{equation}
Also, $\mathcal R(0)=0$, $\mathcal C(0)=1$, $\mathcal R(s)\to1$, and $\mathcal C(s)\to0$ as $|s|\to\infty$ in the closed RHP.
\end{proposition}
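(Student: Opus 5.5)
The whole argument rests on the integration-by-parts identity for the Laplace transform of a derivative. Since $\varphi\in C^1$ and $\varpi=\dot\varphi\in L^1$, the transforms $\Phi(s)$ and $\Omega(s)$ converge absolutely for every $s$ with $\Re(s)\ge0$. I will first establish
\begin{equation*}
\Omega(s)=s\Phi(s)-\varphi(0),\qquad \Re(s)\ge 0 .
\end{equation*}
To obtain it, integrate $\int_0^T e^{-st}\dot\varphi(t)\,{\rm d}t$ by parts. This gives the boundary term $e^{-sT}\varphi(T)-\varphi(0)$ plus $s\int_0^T e^{-st}\varphi(t)\,{\rm d}t$. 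Letting $T\to\infty$, the boundary term tends to $-\varphi(0)$, because $\varphi(T)\to0$ and $|e^{-sT}|\le1$ on the closed RHP. The remaining integrals converge by dominated convergence.

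Dividing the identity by $-\varphi(0)\neq0$ gives $\mathcal C(s)=-\Omega(s)/\varphi(0)=1-s\Phi(s)/\varphi(0)=1-\mathcal R(s)$, which is \eqref{equ: correction function}. The step responses then follow directly:
\begin{itemize}
\item $\mathcal R(s)/s=\Phi(s)/\varphi(0)$, which is the transform of $\varphi(t)/\varphi(0)$;
\item $\mathcal C(s)/s=1/s-\mathcal R(s)/s$, which is the transform of $1-\varphi(t)/\varphi(0)$.
\end{itemize}
Uniqueness of the Laplace transform for the continuous function $\varphi$ justifies applying $\mathcal L^{-1}$.

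For the value at the origin, $\Phi(0)=\int_0^\infty\varphi\,{\rm d}t$ is finite because $\varphi\in L^1$. Hence $\mathcal R(0)=0\cdot\Phi(0)/\varphi(0)=0$, and therefore $\mathcal C(0)=1$. As a consistency check, $\Omega(0)=\int_0^\infty\varpi\,{\rm d}t=\lim_{t\to\infty}\varphi(t)-\varphi(0)=-\varphi(0)$, which agrees.

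The main obstacle is the limit at infinity, namely $\mathcal C(s)\to0$ as $|s|\to\infty$ in the closed RHP. This is equivalent to $\Omega(s)\to0$, a Riemann--Lebesgue-type statement for $\varpi\in L^1$. The plan is to approximate $\varpi$ in $L^1$ by a $C^1_c$ function $g$, with $\|\varpi-g\|_1<\epsilon$. For $g$, integration by parts gives $|\hat g(s)|\le(|g(0)|+\|g'\|_1)/|s|$ when $\Re s\ge0$. Combining this with the uniform bound $|\int e^{-st}(\varpi-g)\,{\rm d}t|\le\epsilon$ on the closed RHP shows $\limsup|\Omega(s)|\le\epsilon$ for arbitrary $\epsilon$. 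The bound holds uniformly in the direction of approach, which covers both the imaginary axis and the interior. Then $\mathcal R(s)=1-\mathcal C(s)\to1$.
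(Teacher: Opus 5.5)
Your proof is correct and follows the same route as the paper: integrate by parts to get $s\Phi(s)=\varphi(0)+\Omega(s)$, divide by $\varphi(0)$, read off the step responses, evaluate at $s=0$, and use the decay of the Laplace transform of an $L^1$ function on the closed RHP. The differences are minor: you obtain $\mathcal R(0)=0$ directly from the finiteness of $\Phi(0)$ (the paper uses $\Omega(0)=-\varphi(0)$, which you include as a check), and you actually prove the uniform decay $\Omega(s)\to0$ by $C^1_c$ approximation, which the paper only asserts.
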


\begin{proof}
The proof is provided in Appendix~\ref{app:proof:residual-correction}.
\end{proof}

Prop.~\ref{prop: residual frequency representation} expresses the recovery through complementary operators. The $\mathcal R$ is the residual fraction of the initial synchronization error, whereas $\mathcal C$ is the corrected fraction, analogous to the relation between sensitivity and complementary sensitivity functions in classical control theory \cite{Seron_Fundamental_1997}.
In other words, $\mathcal R$ indicates the remaining error of the current state with respect to the steady state, per-unit normalized with the initial state.
\begin{equation}\label{equ: residual correction balance}
    \underbrace{\frac{\varphi(t)}{\varphi(0)}}_{\mathcal R\text{: remaining error}}
    +\underbrace{\left(1-\frac{\varphi(t)}{\varphi(0)}\right)}_{\mathcal C\text{: completed correction}}
    =1.
\end{equation}

\section{Integral Conservation for Complex Synchronization}\label{sec: residual integral}

This section establishes the main conservation law for synchronized dynamics satisfying the stated conditions.

\begin{assumption}[Residual Analyticity]\label{assump: residual analyticity}
The residual $\mathcal R(s)$ is meromorphic in the open RHP and admits continuous boundary values on the imaginary axis, with no imaginary-axis poles and zeros except a finite-order structural zero at $s=0$. 
Moreover, $\varphi\in C^2$ and $\dot\varpi\in L^1$.
\end{assumption}

Intuitively, Assumption~\ref{assump: residual analyticity} excludes an accumulation of unstable zeros or singularities in the synchronization residual. 
It is satisfied if there exist only finitely many RHP zeros and poles. For a stable synchronized response satisfying Assumption~\ref{assump: integrable trajectory}, it indicates that $\mathcal R$ is analytic in the open RHP.

Assumptions~\ref{assump: integrable trajectory} and \ref{assump: residual analyticity} ensure the expansion 
\begin{equation}\label{equ: analytic asymptote}
    \mathcal R(s)=1+\frac{a_1}{s}+o(|s|^{-1}),
\end{equation}
as $|s|\to\infty$ with $\Re(s)\geq0$.
with a remainder uniform on RHP semicircles:
\begin{equation}\label{equ: uniform analytic asymptote}
    \lim_{R\to\infty}
    \sup_{\substack{\Re(s)\geq0\\|s|=R}}
    R\left|\mathcal R(s)-1-\frac{a_1}{s}\right|=0.
\end{equation}

Along the positive real axis, \eqref{equ: analytic asymptote} gives $a_1=\lim_{\sigma\to\infty}\sigma[\mathcal R(\sigma)-1]$. The Laplace initial-value theorem then yields
\begin{equation}\label{equ: a1}
\begin{aligned}
    a_1
      =\lim_{\sigma\rightarrow\infty}
      \frac{\sigma\int_0^\infty \exp(-\sigma t)\cdot\varpi(t){\rm d}t}{\varphi(0)}=\frac{\varpi(0^+)}{\varphi(0)}.
\end{aligned}
\end{equation}

Let $z_i$ and $p_i$ denote the open-RHP zeros and poles of $\mathcal{R}(s)$, including algebraic multiplicities, and define
\begin{equation}\label{equ: gamma zp definition}
    \boxed{\Gamma_{\mathcal R}
    =\sum_{\Re(z_i)>0}\Re(z_i)
    -\sum_{\Re(p_i)>0}\Re(p_i).}
\end{equation}
Besides, define the initial relative contraction rate of the complex-angle synchronization error $\gamma_0$ as
\begin{equation}\label{equ: initial contraction rate}
\boxed{\gamma_0=\Re\left(\frac{\varpi(0^+)}{\varphi(0)}\right)=\left.\frac{{\rm d}}{{\rm d}t}\log|\varphi(t)|\right|_{0^+}.}
\end{equation}

Then, building on the above two definitons, denote the logarithmic residual budget as
\begin{equation}\label{equ: residual budget definition}
    \mathcal J=\frac{\pi}{2}\Re(a_1)+\pi\Gamma_{\mathcal R}.
\end{equation}

\begin{theorem}[Complex Synchronization Residual Integral]\label{theo: residual integral}
Every residual satisfying Assumption~\ref{assump: residual analyticity} obeys
\begin{equation}\label{equ: bilateral integral theorem}
\boxed{\operatorname{PV}\int_{-\infty}^{\infty}
\log|\mathcal R(j\omega)|{\rm d}\omega
=2\mathcal J.}
\end{equation}
If $\mathcal R$ is induced by a synchronized trajectory satisfying Assumption~\ref{assump: integrable trajectory}, then
\begin{equation}\label{equ: physical integral theorem}
    \boxed{\mathcal J=\frac{\pi}{2}\gamma_0+\pi\Gamma_{\mathcal R},}
\end{equation}
\end{theorem}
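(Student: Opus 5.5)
The plan is the classical Bode sensitivity-integral argument: apply Cauchy's theorem to $\log \mathcal R$ on a D-shaped contour in the closed RHP. First I remove the RHP zeros and poles with Blaschke products. Set
\[
B_z(s)=\prod_{\Re(z_i)>0}\frac{s-z_i}{s+\bar z_i},\qquad B_p(s)=\prod_{\Re(p_i)>0}\frac{s-p_i}{s+\bar p_i},
\]
and $\tilde{\mathcal R}(s)=\mathcal R(s)B_p(s)/B_z(s)$. Assumption~\ref{assump: residual analyticity} allows only finitely many such zeros and poles, so both products are finite. Then $\tilde{\mathcal R}$ is analytic and zero-free in the open RHP, and $|\tilde{\mathcal R}(j\omega)|=|\mathcal R(j\omega)|$ on the imaginary axis. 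Hence $\log\tilde{\mathcal R}$ has an analytic branch there. Expanding each Blaschke factor at infinity gives $(s-z)/(s+\bar z)=1-2\Re(z)/s+O(s^{-2})$. Combined with \eqref{equ: analytic asymptote}, this yields
\[
\tilde{\mathcal R}(s)=1+\frac{a_1+2\sum\Re(z_i)-2\sum\Re(p_i)}{s}+o(|s|^{-1}).
\]
The $o$ term is uniform on RHP semicircles, because \eqref{equ: uniform analytic asymptote} holds and the product expansions are uniform.

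Next I integrate $\log\tilde{\mathcal R}$ over the contour formed by the segment $[-jR,jR]$ and the semicircle $|s|=R$, $\Re(s)\ge 0$. I indent with a small semicircle of radius $\epsilon$ around $s=0$, where the structural zero of finite order $m$ sits. The total contour integral vanishes by Cauchy's theorem.

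On the large arc, $\log\tilde{\mathcal R}(s)=b/s+o(1/R)$ with $b=a_1+2\Gamma_{\mathcal R}$. Parametrizing $s=Re^{j\theta}$, $ds=jRe^{j\theta}d\theta$, the arc contributes $-j\pi b$ in the limit, up to the orientation sign. On the small indentation, $\log\tilde{\mathcal R}\approx m\log s+c$, and $\epsilon\log\epsilon\to0$, so that contribution vanishes. On the imaginary axis, $ds=j\,d\omega$, and I take real parts. The odd part $\arg\mathcal R(j\omega)$ drops out only in the symmetric principal-value sense. This is why the statement uses PV, since no conjugate symmetry is assumed for a complex trajectory. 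Balancing the axis term against the arc term gives
\[
\operatorname{PV}\int\log|\mathcal R(j\omega)|\,d\omega=\pi\Re(b)=\pi\Re(a_1)+2\pi\Gamma_{\mathcal R}=2\mathcal J,
\]
which is \eqref{equ: bilateral integral theorem}.

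I expect this bookkeeping of the large arc, the branch of $\log$, and the indentation to be the main obstacle. I must check that $\Im\log\tilde{\mathcal R}$ stays bounded on the contour, which holds since $\tilde{\mathcal R}\to1$ at infinity. I must also check that the boundary continuity in Assumption~\ref{assump: residual analyticity} legitimizes the limits as the contour is pushed onto the imaginary axis. I would handle the latter by integrating on $\Re(s)=\delta$ and letting $\delta\to0$ with dominated convergence. Near $0$ this works because $|\log|\mathcal R||\sim m|\log|\omega||$ is integrable.

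For \eqref{equ: physical integral theorem}, I use \eqref{equ: a1}: $a_1=\varpi(0^+)/\varphi(0)$, so $\Re(a_1)=\gamma_0$ by \eqref{equ: initial contraction rate}. The identification $\Re(\varpi/\varphi)=\frac{d}{dt}\log|\varphi|$ follows from $\frac{d}{dt}\log\varphi=\varpi/\varphi$. The initial-value theorem step is justified by integrating by parts under Assumption~\ref{assump: integrable trajectory} and $\varphi\in C^2$, $\dot\varpi\in L^1$. This gives $s\Phi(s)=\varphi(0)+\Omega(s)$ and $s\Omega(s)\to\varpi(0^+)$. Substituting into \eqref{equ: residual budget definition} completes the proof.
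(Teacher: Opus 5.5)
Your proposal is correct and follows the paper's proof essentially step for step. Both use Blaschke factors to get a zero-free function with first-order coefficient $b=a_1+2\Gamma_{\mathcal R}$. Both apply Cauchy's theorem on the indented right half-disk, where the arc contributes $-j\pi b$ and the indentation vanishes like $\epsilon\log\epsilon$, giving $\operatorname{PV}\int_{-\infty}^{\infty}\log\tilde{\mathcal R}(j\omega)\,{\rm d}\omega=\pi b$. Both then obtain $\Re(a_1)=\gamma_0$ from the initial-value theorem. Your $\Re(s)=\delta$ limiting argument is an extra rigor step that the paper skips.

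One side remark needs correcting. For a complex trajectory, $\arg\tilde{\mathcal R}(j\omega)$ is not odd and does not drop out; its PV integral equals $\pi\Im(b)$, and you simply discard it by taking real parts of the complex identity. The PV is actually needed because $\log|\mathcal R(j\omega)|\sim\Im(a_1)/\omega$ has an odd tail that is not absolutely integrable.
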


\begin{proof}
The proof is provided in Appendix~\ref{app:proof:residual-integral}.
\end{proof}

\begin{corollary}[Stable Synchronized Response]\label{cor: stable integral}
Suppose Assumptions~\ref{assump: integrable trajectory} and \ref{assump: residual analyticity} hold for the synchronized response. Then,
\begin{equation}\label{equ: stable residual integral}
    \Gamma_{\mathcal R}
    =\sum_{\Re(z_i)>0}\Re(z_i),\quad
    \mathcal J=\frac{\pi}{2}\gamma_0
    +\pi\sum_{\Re(z_i)>0}\Re(z_i).
\end{equation}
\end{corollary}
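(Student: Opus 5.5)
The corollary should follow from Theorem~\ref{theo: residual integral} once we show that $\mathcal R$ has no open-RHP poles. Then the pole sum in \eqref{equ: gamma zp definition} vanishes, and \eqref{equ: physical integral theorem} gives the stated $\mathcal J$ directly. The work therefore splits into two steps: establish analyticity of $\mathcal R$ on $\mathbb C_+$, then substitute into the theorem.

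For the first step, I use Assumption~\ref{assump: integrable trajectory}: $\varphi\in L^1$. For $\Re(s)\ge 0$ we have $|\exp(-st)\varphi(t)|\le|\varphi(t)|$. Hence the integral \eqref{equ: Phi definition} converges absolutely on the closed RHP, and it converges uniformly in $s$ there. On any compact subset of $\mathbb C_+$, the truncated integrals $\int_0^T\exp(-st)\varphi(t)\,{\rm d}t$ are entire in $s$. By uniform convergence, together with Morera's theorem or a differentiation-under-the-integral argument (dominated by $t|\varphi(t)|e^{-\delta t}$ for $\Re(s)\ge\delta>0$), $\Phi$ is holomorphic on $\mathbb C_+$. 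Since $\varphi(0)\neq0$, multiplying by the entire factor $s/\varphi(0)$ keeps $\mathcal R=s\Phi/\varphi(0)$ holomorphic on $\mathbb C_+$. A holomorphic function has no poles, so every $p_i$ with $\Re(p_i)>0$ is absent and
\begin{equation*}
\Gamma_{\mathcal R}=\sum_{\Re(z_i)>0}\Re(z_i).
\end{equation*}
The same bound shows $\Phi$ is continuous up to the imaginary axis. This is consistent with Assumption~\ref{assump: residual analyticity}, though we only need that assumption to invoke the theorem.

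For the second step, Assumptions~\ref{assump: integrable trajectory} and \ref{assump: residual analyticity} are exactly the hypotheses of the second part of Theorem~\ref{theo: residual integral}. That part gives $\mathcal J=\frac{\pi}{2}\gamma_0+\pi\Gamma_{\mathcal R}$, using $\Re(a_1)=\Re(\varpi(0^+)/\varphi(0))=\gamma_0$ from \eqref{equ: a1} and \eqref{equ: initial contraction rate}. Substituting the pole-free $\Gamma_{\mathcal R}$ yields \eqref{equ: stable residual integral}.

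The only delicate point is that the zero sum must be finite (or at least convergent) for the formula to be meaningful. I would handle it by appealing to the meromorphy and boundary conditions in Assumption~\ref{assump: residual analyticity}, as the theorem already does, rather than proving anything new. Beyond that, the argument is routine.
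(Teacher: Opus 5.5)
Your proposal is correct and follows the paper's route. Because $\varphi\in L^1$, the transform $\Phi$ is analytic in $\mathbb C_+$, so $\mathcal R=s\Phi/\varphi(0)$ has no open-RHP poles; the pole sum in $\Gamma_{\mathcal R}$ is then empty and the expression for $\mathcal J$ follows from the budget definition with $\Re(a_1)=\gamma_0$, where the paper simply asserts the analyticity that you justify via uniform convergence and differentiation under the integral.
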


\begin{proof}
The proof is provided in Appendix~\ref{app:proof:stable-response}.
\end{proof}

\textbf{Applicability and power-system implication of Theorem~\ref{theo: residual integral}:}
Once the system models and initial state are specified, the result is independent of model form. 
The identity is universal in form, not in the value of $\mathcal J$. Different initial states generally give different budgets, but each system--initial combination has a unique one. 

The term $\gamma_0$ describes the initial contraction or expansion of the complex-angle error. The RHP-zero term measures non-minimum-phase cancellation among response components and network paths. 
For LTI models, they reduce to standard finite-dimensional pole-zero conditions.
Controller tuning can therefore reshape the response while preserving these terms, or change the budget by altering them. The former redistributes constrained performance across frequencies or physical variables; only the latter changes the fundamental burden.

\section{Interpretation and Computation of Residual Zeros and Poles}\label{sec: pole zero interpretation}

\subsection{Poles as Response Singularities}

Under Assumption~\ref{assump: residual analyticity}, a pole of a meromorphic continuation of $\Phi_i(s)$ or $\mathcal R_i(s)$ is a singularity of the selected response transform. If the inverse-Laplace contour crosses an isolated pole $p$, its residue contributes a term proportional to $\exp(pt)$. Here, $\Re(p)$ gives the growth or decay rate and $\Im(p)$ the oscillation frequency. 
For a stable synchronized response satisfying Assumption~\ref{assump: integrable trajectory}, $\Phi_i$ is analytic in $\mathbb C_+$ and has no RHP pole. In a more general meromorphic representation, an RHP pole produces exponential growth and is incompatible with stable synchronization.
For LTI systems, residual poles are state-matrix eigenmodes; see Section~\ref{subsec: lti characterization}. 

\subsection{Zeros as Exponentially Weighted Recovery Cancellation}

\begin{proposition}[Weighted-Cancellation Characterization]\label{prop: zero cancellation}
$z=\sigma+j\omega_z\in\mathbb C_+$ is a zero of $\mathcal R$ if and only if $\int_0^\infty \exp(-\sigma t)\cdot\exp(-j\omega_z t)\cdot\varphi(t){\rm d}t=0$.
\end{proposition}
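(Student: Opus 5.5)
The plan is to unwind Definition~\ref{def: residual function} at the point $s=z$ and check that each step is an equivalence. Since $\mathcal R(s)=s\Phi(s)/\varphi(0)$ and $\varphi(0)\neq0$ by Assumption~\ref{assump: integrable trajectory}, the prefactor $1/\varphi(0)$ is a finite nonzero constant. The claim then reduces to showing that, for $z\in\mathbb C_+$, $\mathcal R(z)=0$ if and only if $\Phi(z)=0$, and that $\Phi(z)$ equals the stated integral.

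First I will confirm that $\Phi(z)$ is given by the absolutely convergent Laplace integral for every $z$ with $\Re(z)=\sigma>0$. Because $\varphi\in L^1$, we have $|\exp(-zt)\varphi(t)|\le|\varphi(t)|$. Hence $\int_0^\infty \exp(-\sigma t)\exp(-j\omega_z t)\varphi(t)\,\mathrm dt$ converges absolutely and coincides with $\Phi(z)$ from \eqref{equ: Phi definition}. The same domination, together with Morera's theorem (or differentiation under the integral sign), shows that $\Phi$ is analytic on $\mathbb C_+$. Consequently $\mathcal R$ is analytic there as well, and $\mathcal R(z)$ means the value of this analytic function, not that of some other meromorphic continuation.

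Second, since $z\neq0$ in the open RHP, the factor $z/\varphi(0)$ is nonzero. Therefore $\mathcal R(z)=0\iff\Phi(z)=0$. Combined with the integral representation, this is exactly the stated characterization. If the statement is meant to cover multiplicities, I will also remark that the zero orders agree, because $s/\varphi(0)$ is analytic and nonvanishing near $z$. The physical reading then follows directly: the error weighted by $\exp(-\sigma t)$ and modulated at frequency $\omega_z$ cancels exactly.

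I see no serious obstacle. The only subtle point is making sure the zero is taken in the sense of the convergent Laplace integral on $\mathbb C_+$, so that no analytic-continuation issue arises. Absolute integrability of $\varphi$ settles this. If one works only under the weaker meromorphic framing of Assumption~\ref{assump: residual analyticity}, I would instead invoke the identity theorem: the continuation agrees with the integral on $\mathbb C_+$, where the integral is already analytic.
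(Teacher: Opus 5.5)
Your proposal is correct and follows the paper's own proof: the Laplace integral converges on $\mathbb C_+$, so $\mathcal R(z)=z\Phi(z)/\varphi(0)$ with $z\neq0$ and $\varphi(0)\neq0$, hence $\mathcal R(z)=0\iff\Phi(z)=0$, and $\Phi(z)$ is exactly the stated weighted integral. Your added justification of absolute convergence from $\varphi\in L^1$, and your remark that zero multiplicities match, only fill in details the paper leaves implicit.
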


\begin{proof}
The proof is provided in Appendix~\ref{app:proof:weighted-cancellation}.
\end{proof}

The result gives a physical meaning to an RHP residual zero. Within an exponentially weighted window of scale $1/\sigma$, the complex-angle error cancels exactly after demodulation at $\omega_z$. Thus, a zero is not an independent mode. It results from cancellation among response components and propagation paths seen through the selected channel. 
In power systems, such contributions may arise from coherent machine groups and inverter controls. Different disturbance locations or initial states can therefore produce different residual zeros.

\subsection{Characterization for Linear-Time-Invariant Systems}\label{subsec: lti characterization}
For LTI systems, residual zeros and poles follow directly from transfer-function expressions.

\subsubsection{Unspecified Disturbance Directions}
Consider the linearized model with stable open-loop state matrix $A$ as $\dot x=Ax$, $\varphi_i=C_ix$ and $x(0)=x_0$.

\begin{proposition}[Initial-Condition Transfer-Function Representation]\label{prop: lti initial condition representation}
For a fixed $x_0$ satisfying $\varphi_i(0)=C_ix_0\neq0$, the complex angle error and residual are
\begin{equation}\label{equ: residual lti realization}
    \Phi_i(s)=C_i(sI-A)^{-1}x_0,\quad
    \mathcal R_i(s)=1+\frac{C_iA(sI-A)^{-1}x_0}
    {C_ix_0}.
\end{equation}
Its poles are the modes that are both excited by $x_0$ and observed by $C_i$. Its nonzero zeros coincide with the transmission zeros of the initial-condition channel $C_i(sI-A)^{-1}x_0$, while $s=0$ is the structural residual zero.
\end{proposition}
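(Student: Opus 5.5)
The plan is to compute $\Phi_i$ directly, then read off the residual and its poles and zeros from elementary identities for the resolvent $(sI-A)^{-1}$.

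\textbf{Computing $\Phi_i$ and $\mathcal R_i$.} Since $x(t)=\exp(At)x_0$ and $A$ is stable, the Laplace integral converges for $\Re(s)>\max\Re\lambda(A)$. It gives $\Phi_i(s)=C_i(sI-A)^{-1}x_0$, which extends meromorphically to all of $\mathbb C$. For the residual we use the resolvent identity $s(sI-A)^{-1}=I+A(sI-A)^{-1}$. Substituting it into $\mathcal R_i=s\Phi_i/(C_ix_0)$ yields
\begin{equation*}
\mathcal R_i(s)=\frac{C_ix_0+C_iA(sI-A)^{-1}x_0}{C_ix_0},
\end{equation*}
which is the second formula in \eqref{equ: residual lti realization}. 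As a consistency check, the expansion $(sI-A)^{-1}=s^{-1}I+s^{-2}A+\dots$ reproduces \eqref{equ: analytic asymptote} with $a_1=C_iAx_0/(C_ix_0)=\varpi(0^+)/\varphi(0)$.

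\textbf{Poles.} We write $(sI-A)^{-1}=\operatorname{adj}(sI-A)/\det(sI-A)$, or equivalently use a Jordan/modal decomposition of $A$. Then $C_i(sI-A)^{-1}x_0=\sum_k C_iP_k(s-\lambda_k)^{-1}\cdots x_0$, where $P_k$ are the spectral projectors together with their nilpotent parts. A given $\lambda_k$ survives as a pole exactly when the corresponding coefficient $C_iP_kN_k^m x_0$ is nonzero, i.e.\ when the mode is excited by $x_0$ and observed by $C_i$. Two further points complete this step:
\begin{itemize}
\item Multiplying by $s$ cannot create new poles.
\item It cannot cancel a pole at a nonzero $\lambda_k$. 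Since $A$ is stable, $0$ is not an eigenvalue, so no cancellation at the origin arises.
\end{itemize}
Hence $\mathcal R_i$ and $\Phi_i$ share the same pole set. Equivalently, the poles are the eigenvalues of the restriction of $A$ to the minimal (controllable-from-$x_0$, observable-through-$C_i$) Kalman subsystem of the triple $(A,x_0,C_i)$.

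\textbf{Zeros.} We pass to that minimal realization $(\tilde A,\tilde x_0,\tilde C)$, in which $\Phi_i=\tilde N(s)/\tilde d(s)$ with coprime polynomials and $\tilde d$ the characteristic polynomial of $\tilde A$. The zeros of $\Phi_i$ are the roots of $\tilde N$, which coincide with the transmission (invariant) zeros of the SISO channel $C_i(sI-A)^{-1}x_0$. This can be seen through the Rosenbrock matrix
\begin{equation*}
\begin{bmatrix} sI-\tilde A & -\tilde x_0\\ \tilde C & 0\end{bmatrix},
\end{equation*}
whose determinant equals $\tilde N(s)$ up to sign. Since $\mathcal R_i=s\,\tilde N(s)/(\tilde d(s)\,C_ix_0)$ and $\tilde d(0)\neq0$, the zeros of $\mathcal R_i$ are those of $\tilde N$ together with a simple zero at $s=0$. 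This is the structural zero of Proposition~\ref{prop: residual frequency representation}, where $\mathcal R(0)=0$. If $\tilde N(0)=0$, the order at $0$ increases, but the nonzero zeros are unaffected.

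\textbf{Main obstacle.} The delicate step is the coprimeness/minimality argument: pole--zero cancellations in the nonminimal realization must remove neither genuine zeros nor spurious poles. This is handled by working throughout with the Kalman minimal subsystem, where numerator and denominator are coprime by the standard SISO minimality theorem.
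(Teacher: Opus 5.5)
Your proposal is correct and follows the same route as the paper's proof: $x(t)=e^{At}x_0$ gives $\Phi_i$, and the resolvent identity $s(sI-A)^{-1}=I+A(sI-A)^{-1}$ gives $\mathcal R_i$; the zeros of $\mathcal R_i$ and $\Phi_i$ coincide away from $s=0$, and the poles are read off after discarding uncontrollable, unobservable, and cancelled modes. Your modal expansion, Kalman minimal subsystem, and Rosenbrock-determinant argument spell out what the paper's appendix states in one line, and your remark on the case $\tilde N(0)=0$ adds a precision the paper leaves implicit.
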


\begin{proof}
The proof is provided in Appendix~\ref{app:proof:lti-initial-condition}.
\end{proof}

Suppose $A$ is diagonalizable with $A=\sum_m \lambda_m v_m w_m^H$, where $v_m$ and $w_m$ are biorthogonally normalized right and left eigenvectors. Then, $\Phi_i(s)=\sum_m{(C_iv_m)(w_m^Hx_0)}/{(s-\lambda_m)}$.
It links the poles of $\Phi_i$ to the eigenvalues of $A$ and makes two selection mechanisms explicit: $w_m^Hx_0$ quantifies disturbance excitation, while $C_iv_m$ quantifies bus observation. 

\subsubsection{Prescribed Disturbance Directions}

Consider fixed disturbance directions described by
\begin{equation}\label{equ: lti forced model}
    \dot x=Ax+B_d d,
    \qquad \varphi_i=C_ix+D_{id}d.
\end{equation}
For a predetermined scalar disturbance direction $q$, the disturbance-to-complex-angle-error transfer function is
\begin{equation}
    G_{i,q}(s)=[C_i(sI-A)^{-1}B_d+D_{id}]q.
\end{equation}

\begin{proposition}[Relation to Conventional Transfer-Function Zeros]\label{cor: conventional transfer zeros}
Suppose $D_{id}=0$ and an impulse disturbance $d(t)=q\delta(t)$ is applied to \eqref{equ: lti forced model}. If $C_iB_dq\neq0$, then $x_0=B_dq$ and
\begin{equation}\label{equ: residual from physical channel}
    \boxed{\Phi_i(s)=G_{i,q}(s),\quad
    \mathcal R_i(s)=\frac{sG_{i,q}(s)}{C_iB_dq}.}
\end{equation}
After minimal realization, the nonzero zeros and poles of $\mathcal R_i$ coincide with the conventional transmission zeros and poles of $G_{i,q}$, respectively, and $s=0$ is additionally introduced as the structural synchronization zero.
\end{proposition}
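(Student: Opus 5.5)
The plan is to reduce the statement to Proposition~\ref{prop: lti initial condition representation} by turning the impulsive disturbance into an initial condition. With $D_{id}=0$ and $d(t)=q\delta(t)$, integrate \eqref{equ: lti forced model} over $[0^-,0^+]$. This gives $x(0^+)=x(0^-)+B_dq$. Taking the pre-disturbance state to be the equilibrium $x(0^-)=0$, we get $x_0=B_dq$. For $t>0$ the forcing vanishes, so the trajectory is the free response $\dot x=Ax$ with $x(0)=B_dq$ and $\varphi_i=C_ix$. The hypothesis $C_iB_dq\neq0$ is exactly $\varphi_i(0)=C_ix_0\neq0$, which is what Proposition~\ref{prop: lti initial condition representation} and Definition~\ref{def: residual function} require. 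Applying that proposition gives $\Phi_i(s)=C_i(sI-A)^{-1}B_dq=G_{i,q}(s)$. Substituting into \eqref{equ: residual definition} gives $\mathcal R_i(s)=sG_{i,q}(s)/(C_iB_dq)$, which proves the boxed identity \eqref{equ: residual from physical channel}.

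For the pole--zero claim, I would pass to a minimal realization $(A_m,b_m,c_m)$ of the scalar channel $G_{i,q}$. This removes the modes that are uncontrollable from $B_dq$ or unobservable from $C_i$. What remains is $G_{i,q}=N(s)/D(s)$ with $N$ and $D$ coprime. The poles are the eigenvalues of $A_m$, and the transmission zeros are the roots of $N$. Since $D_{id}=0$ and $C_iB_dq\neq0$, $G_{i,q}$ has relative degree exactly one, so $\deg N=\deg D-1$. Then
\[
\mathcal R_i(s)=\frac{sN(s)}{(C_iB_dq)\,D(s)}
\]
is biproper, consistent with $\mathcal R_i\to1$ in Proposition~\ref{prop: residual frequency representation}. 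Its nonzero zeros are the roots of $N$ and its poles are the roots of $D$, provided no cancellation occurs.

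The key step, and the only real obstacle, is the cancellation check at $s=0$. A factor $s$ could cancel against $D(s)$ only if $0$ were a pole of $G_{i,q}$. This is ruled out because $A$ is stable (Hurwitz), so $D(0)\neq0$. Conversely, $N(0)$ may or may not vanish. If $N(0)\neq0$, then $s=0$ is a simple zero introduced purely by the factor $s$. If $N(0)=0$, the multiplicity at the origin increases, but the statement concerns nonzero zeros, and at $s=0$ the factor $s$ is still additionally present. This matches the finite-order structural zero in Assumption~\ref{assump: residual analyticity}.

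Multiplying by the nonzero constant $1/(C_iB_dq)$ does not move any zero or pole. Nonzero zeros of $sN$ are exactly the nonzero roots of $N$, and coprimality of $N$ and $D$ is preserved. Together these give the claimed correspondence.
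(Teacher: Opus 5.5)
Your proposal is correct and follows the paper's own argument. You convert the impulse into the initial state $x_0=B_dq$, identify $\Phi_i$ with $G_{i,q}$, and note that the nonzero scale factor $1/(C_iB_dq)$ moves no zero or pole, that the factor $s$ adds the structural zero, and that cancellations are removed by minimal realization. Your explicit checks fill in details the paper leaves implicit: the factor $s$ cannot cancel a pole of $G_{i,q}$ because $A$ is Hurwitz, and the case $G_{i,q}(0)=0$ is handled separately.
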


\begin{proof}
The proof is provided in Appendix~\ref{app:proof:transfer-zeros}.
\end{proof}

In power-system studies, $B_dq$ can represent an initial state jump or an impulse applied through specified controls or buses, while $C_i$ selects the observation bus. Once both are fixed, numerical software can directly give the residual poles and zeros from the transfer functions. 



\subsection{Computation from Nonlinear System Models}\label{subsec: trajectory based computation}

The budget in \eqref{equ: residual budget definition} requires two terms: the initial rate $\gamma_0$ and the zero-pole term $\Gamma_{\mathcal R}$. The initial-rate term follows directly from \eqref{equ: initial contraction rate} and the specified initial state and model. For LTI models, the RHP-zero term follows from the transfer-function representations above. For nonlinear models, $\Gamma_{\mathcal R}=\sum_{\Re(z_i)>0}\Re(z_i)$ can be estimated from simulated or measured response samples as follows.

\subsubsection{Contour-Moment Principle}
Let $\mathcal D\subset\mathbb C_+$ be a bounded domain whose positively oriented boundary contains no zero of $\Phi$. Define the zeroth- and first-order contour moments as
\begin{equation}\label{equ: residual zero moments}
    \nu_q(\mathcal D)=\frac{1}{2\pi j}
    \oint_{\partial\mathcal D}s^q\frac{\Phi'(s)}{\Phi(s)}{\rm d}s,
    \qquad q=0,1.
\end{equation}
By the argument principle \cite{Delves_Numerical_1967},
\begin{equation}\label{equ: residual zero moment identities}
    \nu_0(\mathcal D)=\sum_{z_i\in\mathcal D}m_i,
    \qquad
    \nu_1(\mathcal D)=\sum_{z_i\in\mathcal D}m_iz_i,
\end{equation}
where $m_i$ is the algebraic multiplicity. Thus, $\nu_0$ gives the enclosed zero count as an integer-valued convergence check, while $\nu_1$ gives the zero sum. If $\mathcal D$ encloses all RHP zeros, then
\begin{equation}\label{equ: gamma from first moment}
    \Gamma_{\mathcal R}=\Re\bigl(\nu_1(\mathcal D)\bigr)
    =\Re\left(\frac{1}{2\pi j}
    \oint_{\partial\mathcal D}s\frac{\Phi'(s)}{\Phi(s)}{\rm d}s\right).
\end{equation}

\subsubsection{Numerical Evaluation from Nonlinear Response Data}
Suppose the simulation provides $\varphi_n=\varphi(t_n)$ over $0=t_0<\cdots<t_N=T$. Let $\Delta t_n=t_{n+1}-t_n$, which gives
\begin{equation}\label{equ: discrete trajectory laplace transform}
\begin{aligned}
    \widehat\Phi_T
    =\sum_{n=0}^{N-1}
    e^{-st_n}\varphi_n{\Delta t_n},\;
    \widehat\Phi_T'
    =-\sum_{n=0}^{N-1}
    t_ne^{-st_n}\varphi_n{\Delta t_n}.
\end{aligned}
\end{equation}
Use nested rectangular search regions
\begin{equation}\label{equ: D_r search region}
    \mathcal D_r=\mathcal D_{\epsilon_r,S_r,\Omega_r}
=\{s:\epsilon_r\leq\Re(s)\leq S_r,\ |\Im(s)|\leq\Omega_r\}
\end{equation}
where $\epsilon_{r+1}<\epsilon_r$, $S_{r+1}>S_r$, and $\Omega_{r+1}>\Omega_r$. In practice, the outer search range should be chosen consistently with the dynamical bandwidth represented by the adopted model rather than extended arbitrarily. For example, an electromechanical / electromagnetic model typically represents dynamics only up to the order of $10^2$~Hz / $10^3$~Hz. Therefore, $S_r$ and $\Omega_r$ can first be selected to cover the corresponding model-resolvable decay/growth rates and oscillation frequencies, and then enlarged with a margin until the contour moments no longer change, providing evidence that all RHP zeros meaningful to the adopted model have been enclosed. Let $s_m^{(r)}$, $m=0,\ldots,M_r$, be counterclockwise boundary samples with $s_{M_r}^{(r)}=s_0^{(r)}$, and define
$\widehat g_q(s)=s^q\widehat\Phi_T'(s)/\widehat\Phi_T(s)$. For $q=0,1$, the two moments are approximated by
\begin{equation}\label{equ: discrete residual zero moments}
    \widehat\nu_q(\mathcal D_r)=\frac{1}{2\pi j}
    \sum_{m=0}^{M_r-1}
    \frac{\widehat g_q(s_m^{(r)})+\widehat g_q(s_{m+1}^{(r)})}{2}
    \bigl(s_{m+1}^{(r)}-s_m^{(r)}\bigr).
\end{equation}
Accept the zero count only when $\widehat\nu_0$ is close to a nonnegative integer with negligible imaginary part. Expand the contour and refine the boundary grid until both moments converge, e.g., $|\widehat\nu_q(\mathcal D_{r+1})-\widehat\nu_q(\mathcal D_r)|\leq\tau_q$ for $q=0,1$. Then evaluate $\widehat\Gamma_{\mathcal R}=\Re(\widehat\nu_1)$ and $\widehat{\mathcal J}=\gamma_0{\pi}/{2}+\pi\widehat\Gamma_{\mathcal R}$. This procedure estimates the nonlinear residual budget from simulation or measurement data without linearization, subject to the stated regularity and convergence checks.

\section{Frequency- and Time-Domain Recovery Performance Limitations}\label{sec: limitations}

\subsection{Frequency-Domain Limits: Peak--Bandwidth Trade-Off}\label{subsec: frequency limitations}

The contour-integral equality fixes the total logarithmic area but not its frequency distribution. A finite-frequency limit follows by combining the global budget with low-frequency attenuation and high-frequency roll-off requirements, as in standard Bode-integral analysis \cite{Seron_Fundamental_1997}.

For a synchronization residual, define the magnitude as
\begin{equation}\label{equ: paired residual definitions}
    M(\omega)=\sqrt{|\mathcal R(j\omega)\mathcal R(-j\omega)|},\quad \omega\geq0.
\end{equation}
Then, Theorem~\ref{theo: residual integral} yields
\begin{equation}\label{equ: positive frequency budget}
    \int_0^\infty\log M(\omega){\rm d}\omega
    =\frac{1}{2}\operatorname{PV}\int_{-\infty}^{\infty}
    \log|\mathcal R(j\omega)|{\rm d}\omega
    =\mathcal J.
\end{equation}

\begin{figure}[htbp]
\centering
\includegraphics[width=\linewidth]{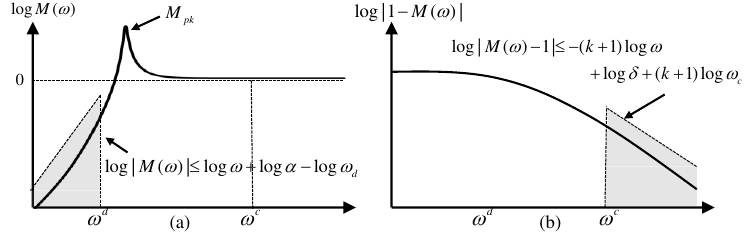}
\caption{Conditions and implication of Theorem~\ref{theo: finite frequency peak}. (a) The low-frequency envelope $\log M(\omega)\leq\log\alpha+\log(\omega/\omega_d)$ on $(0,\omega_d]$ forces a compensating peak in $(\omega_d,\omega_c)$ under a finite tail budget. (b) The tail $|M(\omega)-1|$ is bounded beyond the effective recovery bandwidth $\omega_c$.}
\label{fig: residual_frequency_constraints}
\end{figure}

\begin{theorem}[Finite-Frequency Residual Peak Bound]\label{theo: finite frequency peak}
Suppose Assumptions~\ref{assump: integrable trajectory} and \ref{assump: residual analyticity} hold. Let $0<\omega_d<\omega_c$, $k>0$, $0<\delta<1/2$, and $0<\alpha<1$. Suppose the following bounds hold:
\begin{itemize}
    \item $M(\omega)\leq\alpha(\omega/\omega_d)$ for all $\omega\in[0,\omega_d]$;
    \item $|M(\omega)-1|\leq\delta(\omega_c/\omega)^{1+k}$ for all $\omega\in[\omega_c,\infty)$.
\end{itemize}
Define $M_{\rm pk}=\sup_{\omega\in(\omega_d,\omega_c)}M(\omega)$. Then
\begin{equation}\label{equ: peak lower bound}
\boxed{\log M_{\rm pk}
\geq
\frac{\mathcal J+\omega_d[1-\log\alpha]-\beta_c}
{\omega_c-\omega_d},
\qquad
\beta_c=\frac{3\delta\omega_c}{2k}.}
\end{equation}
\end{theorem}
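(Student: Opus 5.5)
The plan is to split the positive-frequency budget identity \eqref{equ: positive frequency budget}, $\int_0^\infty\log M(\omega)\,{\rm d}\omega=\mathcal J$, into three intervals, $[0,\omega_d]$, $(\omega_d,\omega_c)$ and $[\omega_c,\infty)$. We bound the outer two contributions from above using the two hypotheses, and the middle one by $(\omega_c-\omega_d)\log M_{\rm pk}$. Rearranging then gives \eqref{equ: peak lower bound}. Theorem~\ref{theo: residual integral}, through \eqref{equ: positive frequency budget}, ensures the total is exactly $\mathcal J$. The integral is a genuine improper integral rather than a PV near $\omega=0$: Assumption~\ref{assump: residual analyticity} gives a finite-order zero at $s=0$, so $\log M$ has only a logarithmic, integrable singularity there.

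\emph{Low-frequency piece.} On $[0,\omega_d]$ the hypothesis gives $\log M(\omega)\le\log\alpha+\log(\omega/\omega_d)$. Integrating,
\begin{equation*}
\int_0^{\omega_d}\log M\,{\rm d}\omega\le\omega_d\log\alpha+\int_0^{\omega_d}\log(\omega/\omega_d)\,{\rm d}\omega=\omega_d(\log\alpha-1).
\end{equation*}
This upper bound holds even if the left side is $-\infty$.

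\emph{Middle piece.} By definition of $M_{\rm pk}$, $\int_{\omega_d}^{\omega_c}\log M\,{\rm d}\omega\le(\omega_c-\omega_d)\log M_{\rm pk}$.

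\emph{Tail piece.} This is the step needing care. For $\omega\ge\omega_c$, set $u=M(\omega)-1$. Then $|u|\le\delta(\omega_c/\omega)^{1+k}\le\delta<1/2$. We use the elementary bound $\log(1+u)\le u\le |u|$, valid for all $u>-1$. An upper bound on $\log M$ is all that is needed, so this suffices, and it gives
\begin{equation*}
\int_{\omega_c}^\infty\log M\,{\rm d}\omega\le\delta\omega_c^{1+k}\int_{\omega_c}^\infty\omega^{-1-k}{\rm d}\omega=\frac{\delta\omega_c}{k}\le\frac{3\delta\omega_c}{2k}=\beta_c.
\end{equation*}
The looser constant $3/2$ in the statement presumably comes from the two-sided estimate $|\log(1+u)|\le\tfrac32|u|$ for $|u|\le1/2$. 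That estimate is also what guarantees the tail is absolutely integrable, so the splitting of the improper integral is legitimate. I would state both facts, using the two-sided estimate for convergence and either bound for the inequality.

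\emph{Assembly.} Summing the three bounds,
\begin{equation*}
\mathcal J\le\omega_d(\log\alpha-1)+(\omega_c-\omega_d)\log M_{\rm pk}+\beta_c.
\end{equation*}
Dividing by $\omega_c-\omega_d>0$ yields \eqref{equ: peak lower bound}. The main obstacle is not the algebra but justifying the splitting. We must confirm three things: $\log M$ is integrable near $0$ (the finite-order zero), it is locally integrable on $(\omega_d,\omega_c)$ (continuous boundary values with no imaginary-axis zeros or poles), and the tail converges absolutely (the two-sided log estimate). Together these ensure \eqref{equ: positive frequency budget} holds as an ordinary Lebesgue integral that can be decomposed additively.
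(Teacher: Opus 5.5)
Your proof is correct and follows the paper's argument: you split $\int_0^\infty\log M(\omega)\,{\rm d}\omega=\mathcal J$ over the same three intervals and use the same low-frequency and intermediate-peak bounds. The only difference is in the tail: the paper applies the two-sided estimate $|\log(1+w)|\le\tfrac32|w|$ directly to obtain $\beta_c$, whereas your one-sided bound $\log(1+u)\le u$ gives the sharper constant $\delta\omega_c/k=\tfrac23\beta_c$, so your argument actually proves a slightly stronger peak bound.
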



\begin{proof}
The proof is provided in Appendix~\ref{app:proof:finite-frequency-peak}.
\end{proof}

\textbf{Applicability and power system insights of Theorem~\ref{theo: finite frequency peak}:}
Fig.~\ref{fig: residual_frequency_constraints} illustrates the parameters in the the theorem. The first condition is equivalent to $\log M(\omega)\leq\log\alpha+\log(\omega/\omega_d)$ for $0<\omega\leq\omega_d$,
which gives a linear low-frequency envelope in $(\log\omega,\log M)$ coordinates. The second condition gives $\log|M-1|\leq(k+1)\log(\omega_c/\omega)+\log\delta$ for $\omega\geq\omega_c$. Here, $\omega_c$ is the highest frequency range with substantial recovery dynamics, while $\delta$ and $k$ set the tail magnitude and decay rate. Such a limit reflects finite device bandwidth, filtering, fast unmodeled dynamics, and noise \cite{Peng_Stability_2026}.

The low-frequency slope condition in Theorem~\ref{theo: finite frequency peak} creates a negative logarithmic area of at least $\omega_d[1-\log\alpha]$, while the high-frequency tail can supply at most $\beta_c$; see Appendix~\ref{app:proof:finite-frequency-peak}. The remaining area required by $\mathcal J$ must lie in $(\omega_d,\omega_c)$, giving \eqref{equ: peak lower bound}. In particular, $M_{\rm pk}>1$ is unavoidable if $\mathcal J+\omega_d[1-\log\alpha]>\beta_c$. A smaller $\alpha$, wider design band, larger RHP-zero burden, or narrower recovery bandwidth raises the unavoidable intermediate-frequency peak. Increasing $\omega_c$ can relax the trade-off only by extending recovery over a wider frequency range.

In power-system terms, $M_{\rm pk}$ measures the worst amplification of oscillatory synchronization. The parameter $\omega_c$ represents the usable device frequency range and acts as an effective recovery bandwidth for the specified residual. For the LTI channels in Section~\ref{subsec: lti characterization}, it reduces to the conventional closed-loop bandwidth.

\begin{corollary}[Complex Synchronization Residual Bandwidth]\label{cor: bandwidth lower bound}
Let $\omega_d=\kappa\omega_c$ with $0<\kappa<1$. Suppose the low-frequency condition of Theorem~\ref{theo: finite frequency peak} holds on $[0,\kappa\omega_c]$, the intermediate peak satisfies $M_{\rm pk}\leq\overline M$ with $\overline M>1$, and the high-frequency condition of the same theorem holds. Define
\begin{equation}\label{equ: bandwidth denominator}
    d_{\rm bw}=\kappa(\log\alpha-1)+(1-\kappa)\log\overline M
    +\frac{3\delta}{2k}.
\end{equation}
Suppose $\mathcal J>0$. If $d_{\rm bw}>0$, then necessarily
\begin{equation}\label{equ: bandwidth budget condition}
    \omega_c\geq{\mathcal J}/{d_{\rm bw}}.
\end{equation}
If $d_{\rm bw}\leq0$, these requirements are incompatible.
\end{corollary}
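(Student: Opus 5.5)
The plan is to apply Theorem~\ref{theo: finite frequency peak} with $\omega_d=\kappa\omega_c$ and combine it with the assumed peak cap $M_{\rm pk}\leq\overline M$. Substituting $\omega_d=\kappa\omega_c$ and $\beta_c=3\delta\omega_c/(2k)$ into \eqref{equ: peak lower bound}, whose denominator is $\omega_c-\omega_d=(1-\kappa)\omega_c>0$, gives
\begin{equation*}
(1-\kappa)\omega_c\log\overline M\;\geq\;(1-\kappa)\omega_c\log M_{\rm pk}\;\geq\;\mathcal J+\kappa\omega_c(1-\log\alpha)-\frac{3\delta\omega_c}{2k}.
\end{equation*}
The first inequality holds because $\log$ is monotone, $M_{\rm pk}\leq\overline M$, and the factor $(1-\kappa)\omega_c$ is positive. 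The second is Theorem~\ref{theo: finite frequency peak} multiplied through by that positive denominator. All hypotheses of the theorem are assumed verbatim, so no new analytic work is needed; the finite-frequency theorem, and behind it Theorem~\ref{theo: residual integral} via \eqref{equ: positive frequency budget}, does all the heavy lifting.

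Next I collect the $\omega_c$ terms on one side. This yields
\begin{equation*}
\omega_c\Bigl[\kappa(\log\alpha-1)+(1-\kappa)\log\overline M+\tfrac{3\delta}{2k}\Bigr]\geq\mathcal J,
\end{equation*}
that is, $\omega_c d_{\rm bw}\geq\mathcal J$ with $d_{\rm bw}$ as in \eqref{equ: bandwidth denominator}.

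The conclusion then follows by a sign case split, using $\omega_c>0$ and $\mathcal J>0$:
\begin{itemize}
\item If $d_{\rm bw}>0$, dividing by $d_{\rm bw}$ gives $\omega_c\geq\mathcal J/d_{\rm bw}$, which is \eqref{equ: bandwidth budget condition}.
\item If $d_{\rm bw}\leq0$, the left side $\omega_c d_{\rm bw}$ is nonpositive while the right side $\mathcal J$ is strictly positive, a contradiction. Hence the three requirements (low-frequency envelope, peak cap, high-frequency tail) cannot hold simultaneously for any $\omega_c$.
\end{itemize}

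The only delicate point is bookkeeping of signs. Note that $\log\alpha<0$ makes the $\kappa$ term negative, and $\overline M>1$ makes the peak term positive; neither sign changes the argument. The substantive content of the corollary is inherited from the preceding theorem.
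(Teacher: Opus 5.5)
Your proposal is correct and follows the paper's proof. Both reduce the hypotheses to $\mathcal J\le\omega_c d_{\rm bw}$ and then split on the sign of $d_{\rm bw}$. The paper reaches that inequality by adding the three band-wise bounds directly, while you clear the positive denominator $(1-\kappa)\omega_c$ in Theorem~\ref{theo: finite frequency peak} and use $\log M_{\rm pk}\le\log\overline M$. Your case split also handles $d_{\rm bw}=0$ explicitly, whereas the paper's written proof treats only $d_{\rm bw}<0$, even though the statement claims $d_{\rm bw}\le0$.
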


\begin{proof}
The proof is provided in Appendix~\ref{app:proof:bandwidth}.
\end{proof}

Corollary~\ref{cor: bandwidth lower bound} provides a feasibility test before detailed inverter tuning. Given low-frequency recovery, an allowable oscillatory peak, and high-frequency roll-off, it gives the minimum effective recovery bandwidth required by the budget. If $d_{\rm bw}<0$, the specifications are incompatible and must be relaxed or changed.

\subsection{Time-Domain Limits: Minimum Complex-Frequency Activity in Recovery}\label{subsec: time limitations}

The residual integral also gives a lower bound on total time-domain complex-frequency activity during synchronization.


\begin{theorem}[Minimum Complex-Frequency Activity]\label{theo: frequency activity lower bound}
Suppose Assumptions~\ref{assump: integrable trajectory} and \ref{assump: residual analyticity} hold and the error $\varpi(t)\in L^2$. Then
\begin{equation}\label{equ: general frequency activity lower bound}
\begin{aligned}
    \boxed{\int_0^\infty|\varpi(t)|^2{\rm d}t=\int_0^\infty\left(\frac{\dot V_i(t)}{V_i(t)}\right)^2+\omega_i^2(t){\rm d}t\geq2|\varphi(0)|^2\Gamma_{\mathcal R}.}
\end{aligned}
\end{equation}
\end{theorem}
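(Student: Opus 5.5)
The plan is to turn the time-domain energy into an $H^2$ norm on the imaginary axis, and then bound that norm from below by an interpolation argument that uses only the RHP zeros of $\mathcal R$.

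\textbf{Step 1 (transfer to $H^2$).} Under Assumption~\ref{assump: integrable trajectory} and $\varpi\in L^2$, $\Omega(s)$ in \eqref{equ: Omega definition} belongs to the Hardy space $H^2(\mathbb C_+)$. Plancherel then gives
\[
\int_0^\infty|\varpi(t)|^2{\rm d}t=\frac{1}{2\pi}\int_{-\infty}^{\infty}|\Omega(j\omega)|^2{\rm d}\omega .
\]
By Prop.~\ref{prop: residual frequency representation}, $\Omega=\varphi(0)(\mathcal R-1)$. Hence it suffices to show that $f:=\mathcal R-1\in H^2$ satisfies $\|f\|_2^2\geq 2\sum_{\Re(z_i)>0}\Re(z_i)$. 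By Cor.~\ref{cor: stable integral}, there are no RHP poles, so this sum equals $\Gamma_{\mathcal R}$. The equality $|\varpi|^2=(\dot V/V)^2+\omega^2$ is immediate from Definition~\ref{def: complex angle frequency}.

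\textbf{Step 2 (interpolation constraint).} At every RHP zero $z_i$ of $\mathcal R$ we have $f(z_i)=-1$. With multiplicity $m_i$, the derivatives of $f$ of order $1,\dots,m_i-1$ also vanish at $z_i$, since they coincide with those of $\mathcal R$. First assume finitely many zeros, and form the Blaschke product
\[
B(s)=\prod_i\frac{s-z_i}{s+\bar z_i},
\]
which is inner and satisfies $B(s)\to1$ as $|s|\to\infty$. Use the orthogonal decomposition $H^2=K_B\oplus BH^2$, where $K_B$ is the model space. Two functions in $H^2$ that agree at all zeros of $B$, counted with multiplicity, differ by an element of $BH^2$. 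I will verify that $B-1$ satisfies the same interpolation conditions as $f$, namely value $-1$ and vanishing higher derivatives. I will also verify that $B-1\in K_B$: on the axis, $\bar B(B-1)=1-1/B$, which extends analytically to the LHP and decays like $1/s$, so it lies in $(H^2)^\perp$. Consequently $P_{K_B}f=B-1$, and
\[
\|f\|_2^2\geq\|B-1\|_2^2 .
\]

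\textbf{Step 3 (evaluate $\|B-1\|_2^2$).} Since $|B(j\omega)|=1$, we have $|B-1|^2=2\,\Re(1-B)$. Expanding at infinity gives $1-B(s)=a/s+O(s^{-2})$ with $a=2\sum\Re(z_i)$. The function $1-B$ is analytic in the closed RHP. Closing the contour over a large RHP semicircle, exactly as in the proof of Theorem~\ref{theo: residual integral}, gives
\[
\operatorname{PV}\int_{-\infty}^{\infty}\bigl(1-B(j\omega)\bigr){\rm d}\omega=\pi a .
\]
Therefore $\|B-1\|_2^2=\frac{1}{2\pi}\cdot2\cdot\pi a=2\sum\Re(z_i)$. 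Multiplying by $|\varphi(0)|^2$ yields \eqref{equ: general frequency activity lower bound}.

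\textbf{Main obstacle.} The delicate point is the rigor in Step 2: the orthogonality claim and the interpolation with multiplicities. The case of infinitely many RHP zeros is handled by applying the bound to finite sub-products $B_N$ and letting $N\to\infty$. The resulting bound is monotone in $N$, and the series $\sum\Re(z_i)$ is covered by the Blaschke condition for $f+1$; if the series diverges, the bound forces $\|f\|_2=\infty$, contradicting $\varpi\in L^2$.
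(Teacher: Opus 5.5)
Your proof is correct, and it takes a genuinely different route from the paper's. The paper derives the bound as a corollary of Theorem~\ref{theo: residual integral}. It applies the pointwise inequality $\log|1+u|\leq\Re(u)+\tfrac12|u|^2$ to $u=\mathcal R(j\omega)-1$ and integrates over the imaginary axis. It then replaces the left side by $2\mathcal J=\pi\gamma_0+2\pi\Gamma_{\mathcal R}$ and cancels the $\pi\gamma_0$ term using the one-sided inversion formula $\operatorname{PV}\int_{-\infty}^{\infty}[\mathcal R(j\omega)-1]{\rm d}\omega=\pi\varpi(0^+)/\varphi(0)$. Parseval finishes the argument.

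Your minimal-norm interpolation (model-space projection) argument never uses the logarithmic integral. It needs only $\mathcal R-1\in H^2$ and the interpolation data at the RHP zeros. So Assumption~\ref{assump: residual analyticity}, and the $C^2$, $\dot\varpi\in L^1$ regularity behind the inversion formula, are not actually required for this bound.

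Your route also gives more than an inequality. With finitely many zeros, the orthogonal splitting yields the exact identity $\int_0^\infty|\varpi(t)|^2{\rm d}t=|\varphi(0)|^2\bigl[2\Gamma_{\mathcal R}+\tfrac{1}{2\pi}\int_{-\infty}^{\infty}|\mathcal R(j\omega)/B(j\omega)-1|^2{\rm d}\omega\bigr]$. Equality would therefore require the all-pass residual $\mathcal R=B$. But $\mathcal R(\sigma)\to0$ as $\sigma\to0^+$ while $|B(0)|=1$, so in the finite-zero case the bound is strict for every synchronized response.

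The paper's route is shorter once Theorem~\ref{theo: residual integral} is available. It also shows that the activity bound is what remains of the logarithmic budget after the initial-rate term cancels. Its slack $\int[\Re(u)+\tfrac12|u|^2-\log|1+u|]{\rm d}\omega$, however, is not explicit, whereas yours is.

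One minor correction. The half-plane Blaschke condition is $\sum\Re(z_i)/(1+|z_i|^2)<\infty$, which does not imply $\sum\Re(z_i)<\infty$. You do not need it: your monotone finite-subproduct argument already gives both the bound and the finiteness of the sum directly.
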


\begin{proof}
The proof is provided in Appendix~\ref{app:proof:activity}.
\end{proof}

\textbf{Applicability and power-system implication of Theorem~\ref{theo: frequency activity lower bound}:}
The left-hand side of \eqref{equ: general frequency activity lower bound} combines voltage- and phase-angle-rate activity and therefore measures time-domain recovery effort. It also matches the metric used in \cite{Zhong_Improving_2022}, thereby providing a theoretical basis for that metric. 
On the right-hand side of \eqref{equ: general frequency activity lower bound}, the RHP-zero term represents a non-minimum-phase recovery path. Thus, a minimal amount of complex-frequency activity is unavoidable once the system and disturbances are determined. A controller may divide this activity between frequency and voltage or spread it over a longer interval, but cannot make the total arbitrarily small for fixed initial error and RHP-zero burden. Hence, the dynamic enhancement of angle/voltage is at the expense of worse voltage/angle transients elsewhere.

For a finite recovery horizon $T$, define the tail activity as $\epsilon_{\varpi}(T)=\int_T^\infty|\varpi(t)|^2{\rm d}t$. Then, Theorem~\ref{theo: frequency activity lower bound} and $\int_0^T|\varpi(t)|^2{\rm d}t\leq T\max_{0\leq t\leq T}|\varpi(t)|^2$ directly yield
\begin{equation}\label{equ: finite horizon peak bound}
    \max_{0\leq t\leq T}|\varpi(t)|
    \geq
    \sqrt{\frac{
    \max\{2|\varphi(0)|^2\Gamma_{\mathcal R}-\epsilon_{\varpi}(T),0\}}{T}}.
\end{equation}
Thus, completing most of the correction before $T$ requires a larger peak complex-frequency deviation as $T$ decreases.

\section{Extension to System-Level Integral Conservation}\label{sec: extension to system level}

\subsection{General System-Level Conservation}\label{subsec: general system level conservation}
The bus-level residual in Theorem~\ref{theo: residual integral} can be extended to a system-level form. Along the post-fault solution $(x(t),y(t))$ of \eqref{equ: dae model}, assume $g_y$ is nonsingular and let $\boldsymbol{\vartheta}=[\vartheta_1,\ldots,\vartheta_n]^T$. Eliminating algebraic variations gives
\begin{equation}\label{equ: nonlinear variational jacobians}
A(t)=f_x-f_y g_y^{-1}g_x,\qquad
C(t)=\boldsymbol{\vartheta}_x-\boldsymbol{\vartheta}_y g_y^{-1}g_x,
\end{equation}
where all Jacobians are evaluated along $(x(t),y(t))$. The state-flow Jacobian $\Psi(t)=\partial x(t)/\partial x_0$ satisfies
\begin{equation}\label{equ: nonlinear variational flow}
\dot\Psi(t)=A(t)\Psi(t),\qquad \Psi(0)=I.
\end{equation}

Consider an admissible $n$-dimensional perturbation $\eta$ of the fault-clearing state, $x_0(\eta)=x_0+B_0\eta+o(\|\eta\|)$, with $B_0\in\mathbb R^{n_x\times n}$. Let $\boldsymbol{\varphi}(t;\eta)=\boldsymbol{\vartheta}(t;\eta)-\boldsymbol{\vartheta}^{*}$ be the bus complex-angle error. Its differential system-level form is
\begin{equation}\label{equ: nonlinear differential synchronization error}
\bm{\varphi}(t):=\left.\frac{\partial\boldsymbol{\varphi}(t;\eta)}{\partial\eta}\right|_{\eta=0}
=C(t)\Psi(t)B_0,\;
\bm{\varphi}_0=\bm{\varphi}(0)=C_0B_0,
\end{equation}
where $\det\bm{\varphi}_0\neq0$. In direct analogy with \eqref{equ: Phi definition}-\eqref{equ: residual definition}, define the system-level complex residual
\begin{equation}\label{equ: nonlinear matrix residual definition}
\mathbf\Phi(s)=\int_0^\infty \exp(-st)\bm{\varphi}(t){\rm d}t,\quad
\boldsymbol{\mathcal R}(s)=s\mathbf\Phi(s)\bm{\varphi}_0^{-1}.
\end{equation}
Thus, $\mathcal L^{-1}\{\boldsymbol{\mathcal R}(s)/s\}=\bm{\varphi}(t)\bm{\varphi}_0^{-1}$, so $\boldsymbol{\mathcal R}$ describes normalized differential propagation of the initial bus synchronization error. Define its determinant as
\begin{equation}
    \Delta_{\mathcal R}(s)=\det\boldsymbol{\mathcal R}(s).
\end{equation}

\begin{theorem}[Global Synchronization Residual Integral]\label{theo: global residual integral}
Suppose $\bm{\varphi}(t)$ is element-wise in $L^1\cap C^2$ with $\dot{\bm{\varphi}},\ddot{\bm{\varphi}}\in L^1$, and $\Delta_{\mathcal R}$ satisfies the pole-zero condition in Assumption~\ref{assump: residual analyticity}. Define
\begin{equation}\label{equ: global residual quantities}
\begin{aligned}
\Gamma_{\boldsymbol{\mathcal R}}&=\sum_{\Re(z_i)>0}\Re(z_i)-\sum_{\Re(p_i)>0}\Re(p_i),\\
\gamma_0&=\Re\operatorname{tr}\!\left[\dot{\bm{\varphi}}(0^+)\bm{\varphi}_0^{-1}\right]
=\Re\operatorname{tr}\!\left\{[\dot C_0+C_0A_0]B_0(C_0B_0)^{-1}\right\}\\
&=\left.\frac{{\rm d}}{{\rm d}t}\log|\det\bm{\varphi}(t)|\right|_{0^+},
\end{aligned}
\end{equation}
where $z_i$ and $p_i$ are the RHP zeros and poles of $\Delta_{\mathcal R}$, and the subscript $0$ denotes evaluation at $t=0^+$. Then
\begin{equation}\label{equ: global bilateral integral}
\begin{aligned}
\boxed{\sum_{i=1}^n\operatorname{PV}\int_{-\infty}^{\infty}\log|\lambda_i(\boldsymbol{\mathcal R}(j\omega))|{\rm d}\omega
=\pi\gamma_0+2\pi\Gamma_{\boldsymbol{\mathcal R}}.}
\end{aligned}
\end{equation}
\end{theorem}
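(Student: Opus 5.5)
The plan is to reduce the matrix statement to a scalar Bode-type identity for the determinant $\Delta_{\mathcal R}(s)$, so that the contour argument behind Theorem~\ref{theo: residual integral} can be reused. The first step is pointwise in $\omega$. Since $\det\boldsymbol{\mathcal R}(j\omega)=\prod_i\lambda_i(\boldsymbol{\mathcal R}(j\omega))$, we have
\begin{equation*}
\sum_{i=1}^n\log|\lambda_i(\boldsymbol{\mathcal R}(j\omega))|=\log|\Delta_{\mathcal R}(j\omega)|
\end{equation*}
wherever $\Delta_{\mathcal R}(j\omega)\neq0$. Because the left-hand side of \eqref{equ: global bilateral integral} is a sum of principal-value integrals, we interpret it as the principal value of the integral of this sum. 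The statement then reduces to
\begin{equation*}
\operatorname{PV}\int\log|\Delta_{\mathcal R}(j\omega)|{\rm d}\omega=\pi\gamma_0+2\pi\Gamma_{\boldsymbol{\mathcal R}}.
\end{equation*}

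The second step establishes the high-frequency expansion of $\boldsymbol{\mathcal R}$, in analogy with \eqref{equ: analytic asymptote}. The element-wise regularity $\bm\varphi\in C^2$ with $\dot{\bm\varphi},\ddot{\bm\varphi}\in L^1$ permits two integrations by parts:
\begin{equation*}
s\mathbf\Phi(s)=\bm\varphi_0+\frac{\dot{\bm\varphi}(0^+)}{s}+\frac{1}{s}\int_0^\infty e^{-st}\ddot{\bm\varphi}(t){\rm d}t .
\end{equation*}
The last integral tends to zero uniformly on RHP semicircles by a Riemann--Lebesgue/dominated convergence argument. Multiplying on the right by $\bm\varphi_0^{-1}$ gives
\begin{equation*}
\boldsymbol{\mathcal R}(s)=I+\frac{\mathbf A_1}{s}+o(|s|^{-1}),\qquad \mathbf A_1=\dot{\bm\varphi}(0^+)\bm\varphi_0^{-1}.
\end{equation*}
Expanding the determinant, $\det(I+\mathbf A_1/s+o(|s|^{-1}))=1+\operatorname{tr}\mathbf A_1/s+o(|s|^{-1})$, again uniformly. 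Thus $\Delta_{\mathcal R}$ satisfies the scalar expansion \eqref{equ: uniform analytic asymptote} with $a_1=\operatorname{tr}\mathbf A_1$. We also have $\Delta_{\mathcal R}(s)\to1$, so no zeros or poles accumulate at infinity.

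Next, we identify $\Re(a_1)$ with $\gamma_0$ in its three stated forms. Differentiating \eqref{equ: nonlinear differential synchronization error} gives
\begin{equation*}
\dot{\bm\varphi}=(\dot C+CA)\Psi B_0,
\end{equation*}
and at $0^+$, where $\Psi(0)=I$, this yields the Jacobian expression. Jacobi's formula, $\frac{{\rm d}}{{\rm d}t}\log\det\bm\varphi=\operatorname{tr}(\dot{\bm\varphi}\bm\varphi^{-1})$, together with $\Re\log\det=\log|\det|$, gives the logarithmic-derivative form.

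Finally, we apply the scalar result. Theorem~\ref{theo: residual integral}, \eqref{equ: bilateral integral theorem}, is stated for any function satisfying Assumption~\ref{assump: residual analyticity} together with the expansion. The only ingredients its contour proof uses are meromorphy in $\mathbb C_+$, continuous boundary values, the finite-order zero at $s=0$, the finitely many RHP zeros and poles, and the uniform expansion. Applying it to $\Delta_{\mathcal R}$ yields
\begin{equation*}
\operatorname{PV}\int\log|\Delta_{\mathcal R}|=\pi\Re(a_1)+2\pi\Gamma_{\boldsymbol{\mathcal R}}.
\end{equation*}
If one prefers not to rely on the scalar theorem being applicable verbatim, the contour argument can be repeated directly. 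We factor out the RHP zeros and poles with Blaschke factors $\prod(s-z_i)/(s+\bar z_i)$ and their pole counterparts, and factor out the structural zero $s^m$, handling the latter by indentation. We then integrate $\log$ of the resulting outer factor over the large semicircle; this contributes $\pi\Re(a_1)$ via the $1/s$ term. The Blaschke factors contribute $2\pi\sum\Re(z_i)$ and $-2\pi\sum\Re(p_i)$.

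The main obstacle is the uniform $o(|s|^{-1})$ remainder for the determinant. The products of matrix entries create cross terms, each of which must be shown to be $O(|s|^{-2})$ uniformly. This follows because each entry of $\boldsymbol{\mathcal R}-I$ is $O(|s|^{-1})$ uniformly, by the $L^1$ bounds on $\dot{\bm\varphi}$ and $\ddot{\bm\varphi}$. A secondary subtlety is that the structural zero of $\Delta_{\mathcal R}$ at $s=0$ may have order up to $n$. The indentation contribution around $s=0$ still vanishes for any finite order, since $\log|s|$ is integrable.
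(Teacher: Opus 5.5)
Your proposal is correct and follows the paper's proof essentially step for step: the paper likewise integrates by parts entry-wise to obtain $\boldsymbol{\mathcal R}(s)=I+\dot{\bm{\varphi}}(0^+)\bm{\varphi}_0^{-1}/s+o(|s|^{-1})$, expands $\Delta_{\mathcal R}(s)=1+\operatorname{tr}[\dot{\bm{\varphi}}(0^+)\bm{\varphi}_0^{-1}]/s+o(|s|^{-1})$, applies Theorem~\ref{theo: residual integral} to $\Delta_{\mathcal R}$, and uses $\log|\det\boldsymbol{\mathcal R}|=\sum_i\log|\lambda_i(\boldsymbol{\mathcal R})|$; your remarks on uniformity, Jacobi's formula and the indentation only fill in details the paper leaves implicit. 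One harmless slip: since $\Delta_{\mathcal R}(s)=s^n\det\mathbf\Phi(s)/\det\bm{\varphi}_0$, the structural zero at the origin has order at least $n$, not at most $n$, but this does not affect the argument.
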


\begin{proof}
The proof is provided in Appendix~\ref{app:proof:global-integral}.
\end{proof}

For full reduced-state observation, $C=I$ and $B_0=I$, so $\bm{\varphi}=\Psi$. Liouville's identity, $\det\Psi(t)=\exp\{\int_0^t\operatorname{tr}[A(\tau)]{\rm d}\tau\}$, gives $\gamma_0=\Re\operatorname{tr}(A_0)$. Thus, \eqref{equ: global residual quantities} is the bus complex-angle projection of nonlinear phase-space contraction. Physically, it reflects the total available damping suggested in \cite{Peng_Impact_2025}.

For LTI systems, the above differential construction further reduces to the pole-zero structure of the observed synchronization channel.
\begin{corollary}[Global Integral for LTI Systems]\label{cor: global lti residual integral}
Consider $\dot x=Ax$ and $\boldsymbol{\varphi}=Cx$, where $A\in\mathbb C^{n_x\times n_x}$ is Hurwitz and $C\in\mathbb C^{n\times n_x}$. For $B_0\in\mathbb C^{n_x\times n}$ with $\det(CB_0)\neq0$,
\begin{equation}\label{equ: global lti residual}
\mathbf\Phi(s)=C(sI-A)^{-1}B_0,\;
\boldsymbol{\mathcal R}(s)=sC(sI-A)^{-1}B_0(CB_0)^{-1}.
\end{equation}
Let $z_i$ and $p_i$ be the finite zeros and poles of $\Delta_{\mathcal R}$. Then
\begin{equation}\label{equ: global lti pole-zero integral}
\operatorname{PV}\int_{-\infty}^{\infty}\log|\det\boldsymbol{\mathcal R}(j\omega)|{\rm d}\omega
=\pi\left(\sum_i|\Re(z_i)|-\sum_i|\Re(p_i)|\right).
\end{equation}
Furthermore, if $n=n_x$ and $C$ is nonsingular, then
$\boldsymbol{\mathcal R}(s)=sC(sI-A)^{-1}C^{-1}$ and
\begin{equation}\label{equ: global lti full-state integral}
\operatorname{PV}\int_{-\infty}^{\infty}\log|\det\boldsymbol{\mathcal R}(j\omega)|{\rm d}\omega
=\pi\Re\operatorname{tr}(A)
=\pi\sum_{i=1}^{n}\Re\lambda_i(A).
\end{equation}
\end{corollary}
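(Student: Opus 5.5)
The plan is to bypass the contour argument of Theorem~\ref{theo: global residual integral}: in the LTI case $\Delta_{\mathcal R}$ is an explicit rational function, and the integral can be computed factor by factor.

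\textbf{Step 1: the residual and its determinant.} Since $\dot x=Ax$, we have $x(t)=\exp(At)B_0\eta$ for the perturbation direction, so $\bm{\varphi}(t)=C\exp(At)B_0$. Its Laplace transform is $\mathbf\Phi(s)=C(sI-A)^{-1}B_0$, because $A$ is Hurwitz. Substituting into \eqref{equ: nonlinear matrix residual definition} gives \eqref{equ: global lti residual}. Using $s(sI-A)^{-1}=I+A(sI-A)^{-1}$, I would rewrite
\begin{equation*}
\boldsymbol{\mathcal R}(s)=I+CA(sI-A)^{-1}B_0(CB_0)^{-1},
\end{equation*}
which mirrors Proposition~\ref{prop: lti initial condition representation}. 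Hence $\Delta_{\mathcal R}(s)$ is a rational function with $\Delta_{\mathcal R}(s)\to1$ as $|s|\to\infty$. After cancelling common factors I would therefore write
\begin{equation*}
\Delta_{\mathcal R}(s)=\prod_{i=1}^{N}\frac{s-z_i}{s-p_i},
\end{equation*}
with equally many finite zeros and poles and unit leading coefficient. Common factors cancel in pairs and contribute equally to both sums in \eqref{equ: global lti pole-zero integral}, so the claimed value does not depend on whether cancellation is performed.

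\textbf{Step 2: the elementary integral.} For $z=a+jc$ and $p=b+jd$, the factor contributes
\begin{equation*}
\log\left|\frac{j\omega-z}{j\omega-p}\right|=\frac12\log\frac{(\omega-c)^2+a^2}{(\omega-d)^2+b^2}.
\end{equation*}
I would then use the standard identity that its integral over $\mathbb R$ equals $\pi(|a|-|b|)$. The proof is by differentiating in $a$ and $b$, which gives $\int a/(\omega^2+a^2)\,{\rm d}\omega=\pi\operatorname{sgn}(a)$, together with a shift of $\omega$.

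Summing over factors gives \eqref{equ: global lti pole-zero integral}. Zeros on the imaginary axis have $a=0$, including the structural zeros at $s=0$. They produce only integrable logarithmic singularities and contribute $0$, consistent with the $|\Re|$ form of the formula.

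\textbf{Step 3 (main obstacle): convergence and the PV.} The delicate step is justifying the termwise computation under the symmetric principal value. The individual factor integrals converge only conditionally when $c\neq d$. I would handle this by working on $[-R,R]$ and using the expansion $\log|\Delta_{\mathcal R}(j\omega)|=\Im(a_1)/\omega+O(\omega^{-2})$, with $a_1=\sum_i(p_i-z_i)$. The odd $1/\omega$ term cancels under symmetric truncation, and the remainder is absolutely integrable. Each truncated factor integral converges to $\pi(|a|-|b|)$ up to odd terms of the same kind, so letting $R\to\infty$ justifies the exchange of sum and limit.

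\textbf{Step 4: full-state case.} If $n=n_x$ and $C$ is nonsingular, then $CB_0$ is invertible, so $B_0$ is too. Therefore $(sI-A)^{-1}B_0(CB_0)^{-1}=(sI-A)^{-1}C^{-1}$, giving $\boldsymbol{\mathcal R}(s)=sC(sI-A)^{-1}C^{-1}$. Then
\begin{equation*}
\Delta_{\mathcal R}(s)=\frac{s^n}{\det(sI-A)}=\prod_{i=1}^{n}\frac{s}{s-\lambda_i(A)}.
\end{equation*}
All zeros lie at $s=0$ and contribute nothing. Each pole contributes $-\pi|\Re\lambda_i|=\pi\Re\lambda_i$ because $A$ is Hurwitz, which yields \eqref{equ: global lti full-state integral}. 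Finally, $\sum_i\Re\lambda_i(A)=\Re\operatorname{tr}(A)$.
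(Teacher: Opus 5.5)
Your argument is correct, but it takes a different route from the paper. The paper never evaluates anything on the imaginary axis. It applies Theorem~\ref{theo: global residual integral}, and hence the Blaschke-factor contour argument of Theorem~\ref{theo: residual integral}, to $\Delta_{\mathcal R}$. It then reads off $\gamma_0=\Re\bigl(\sum_i p_i-\sum_i z_i\bigr)$ from the normalized factorization $\Delta_{\mathcal R}(s)=1+\bigl(\sum_i p_i-\sum_i z_i\bigr)/s+o(|s|^{-1})$. Finally it notes that in $\pi\gamma_0+2\pi\Gamma_{\boldsymbol{\mathcal R}}$ the doubled RHP terms flip the sign of the RHP real parts, which produces the absolute-value sums.

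You instead integrate each factor $(s-z_i)/(s-p_i)$ directly, using $\int_{\mathbb R}\log\frac{u^2+a^2}{u^2+b^2}\,{\rm d}u=2\pi(|a|-|b|)$. Your route is elementary and self-contained. It also avoids the hypotheses that invoking the theorem brings along. The main one is the exclusion, in Assumption~\ref{assump: residual analyticity}, of imaginary-axis zeros other than $s=0$; the corollary does not impose this on $\Delta_{\mathcal R}$, and your computation handles such zeros automatically because they contribute $0$. The paper's route, in return, displays the LTI formula as initial contraction plus RHP burden, which is the interpretation it cares about.

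One calibration on your Step~3: it is lighter than you make it. A finite sum commutes with $\lim_{R\to\infty}$ once each factor's symmetric limit exists, so no exchange argument is needed. What actually needs a line is that per-factor limit when $c\neq d$, because shifting $\omega$ separately inside the two logarithms is illegitimate (each diverges). Write the factor as $\frac12\log\frac{(\omega-c)^2+a^2}{(\omega-c)^2+b^2}+f(\omega-c)-f(\omega-d)$ with $f(u)=\frac12\log(u^2+b^2)$. The first term is absolutely integrable with integral $\pi(|a|-|b|)$. For the second, $\int_{-R}^{R}[f(\omega-c)-f(\omega-d)]\,{\rm d}\omega=\int_{R-d}^{R-c}f(u)\,{\rm d}u-\int_{R+c}^{R+d}f(u)\,{\rm d}u=O(1/R)$.
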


\begin{proof}
The proof is provided in Appendix~\ref{app:proof:global-lti}.
\end{proof}

Equations \eqref{equ: nonlinear variational jacobians}-\eqref{equ: global bilateral integral} show that higher-order device and network dynamics enter the system-level budget through the DAE variational flow and bus complex-angle observation. In the LTI limit, this becomes the pole-zero structure in \eqref{equ: global lti pole-zero integral}. With full observation such as the pure-dVOC stations \cite{He_Passivity_2024}, it further reduces to $-\Re[\operatorname{tr}(A)]$, which has clear interpretation as the total available damping \cite{Peng_Impact_2025}.

\subsection{Bode-Type Sensitivity Conservation as a Special Case}\label{subsec: sensitivity conservation special case}
The global law reduces to the Bode-type sensitivity conservation in \cite{Peng_Stability_2026} under the following conditions. Linearize the post-disturbance dynamics and use the LTI feedback interconnection in \cite{Peng_Stability_2026}.
Consider the synchronization dynamics under additive angle--voltage disturbance $\bm d=[\Delta\bm\theta_d^T,\Delta\bm V_d^T]^T$ as
\begin{equation}\label{equ: previous sensitivity definition}
\begin{bmatrix}\Delta\bm\theta\\ \Delta\bm V\end{bmatrix}=S(s)\begin{bmatrix}\Delta\bm\theta_d\\ \Delta\bm V_d\end{bmatrix},\quad S(s)=[I+L(s)]^{-1}
\end{equation}
where $L$ is the equivalent open-loop transfer function and $S$ the sensitivity function. Assume that $L$ is rational, proper, and stable, with entry-wise relative degree greater than one; the closed loop is stable; and no RHP pole-zero cancellation occurs. Also assume $V_i^*\approx1$ p.u. for each bus, so
\begin{equation}\label{equ: log voltage small signal equivalence}
\Delta(\log V_i)=\log(V_i^*+\Delta V_i)-\log V_i^*
=\frac{\Delta V_i}{V_i^*}+o(\Delta V_i)\approx\Delta V_i.
\end{equation}
Hence, the differential complex-angle coordinates $[\Delta\bm\theta^T,\Delta\log\bm V^T]^T$ coincide with those in \eqref{equ: previous sensitivity definition}. The determinant argument in Theorem~\ref{theo: global residual integral} therefore applies to this $2n$-dimensional system.

To connect the forced sensitivity response with the initial-condition recovery, apply a constant disturbance $\bm d(t)=\bm d_0\mathbf 1(t)$ and measure the response relative to $S(0)\bm d_0$. Considering all disturbance directions gives
\begin{equation}\label{equ: sensitivity centered response}
\mathbf\Phi_S(s)=\frac{S(s)-S(0)}{s},\quad
\bm\varphi_{S,0}=S(\infty)-S(0)=I-S(0),
\end{equation}
where $S(\infty)=I$ follows from the relative-degree condition. If $I-S(0)$ is nonsingular, the corresponding residual is $\boldsymbol{\mathcal R}_S(s)=[S(s)-S(0)][I-S(0)]^{-1}$. For zero steady-state sensitivity, $S(0)=0$, this becomes
\begin{equation}\label{equ: residual equals sensitivity}
\boldsymbol{\mathcal R}_S(s)=S(s).
\end{equation}

The relative-degree condition gives $L(s)=o(|s|^{-1})$ and $S(s)=I+o(|s|^{-1})$. Hence the first-order high-frequency term vanishes and $\gamma_0=0$. The stability together with no RHP cancellation imply that $\det S$ has no RHP zeros or poles, so $\Gamma_{\boldsymbol{\mathcal R}}=0$. Substituting \eqref{equ: residual equals sensitivity} into \eqref{equ: global bilateral integral} gives $\operatorname{PV}\int_{-\infty}^{\infty}\log|\det S(j\omega)|{\rm d}\omega=0$.
For a real LTI model, $S(-j\omega)=\overline{S(j\omega)}$ and $|\det S|=\prod_{i=1}^{2n}|\lambda_i(S)|$. Thus, it becomes
\begin{equation}\label{equ: recovered previous sensitivity conservation}
\boxed{\sum_{i=1}^{2n}\int_0^\infty\log|\lambda_i(S(j\omega))|{\rm d}\omega=0,}
\end{equation}
which is precisely the sensitivity-conservation identity in \cite{Peng_Stability_2026}.

This reduction clarifies the relation between the two frameworks. The prescribed disturbance channel in \cite{Peng_Stability_2026} is a special mechanism for generating the synchronization dynamics considered here. Its recovery budget is zero, giving the classical waterbed effect. The angle--voltage sensitivity-breaking law in \cite{Peng_Stability_2026} is therefore a decomposition of this zero total budget between angle- and voltage-related eigenvalue branches.

\section{Numerical Validation}\label{sec: ieee39 validation}

The proposed mechanism is tested on an IEEE 39-bus system with synchronous generators (SGs) and heterogeneous inverters, including virtual SGs (VSGs), droop controls, and quadratic droops. The default operating point is small-signal stable, and the system and simulation settings follow \cite{Peng_Impact_2025}.
The first two experiments are conducted on the linearized model while the rest are based on nonlinear DAE dynamics.

\subsection{Frequency-Domain Conservation Across Initial Conditions and Observation Buses}\label{subsec: theorem1 universality validation}
We first verify the conservation identity across different initial-state directions and observation buses. Theorem~\ref{theo: residual integral} and Prop.~\ref{prop: lti initial condition representation} predict $\mathcal J_i^{\rm freq}=\mathcal J_i^{\rm zp}$. 
\begin{equation}\label{equ: ieee39 frequency budget} \begin{aligned} \mathcal J_i^{\rm freq}(x_0)=\int_0^\infty \log M_i(\omega;x_0){\rm d}\omega,\;
\mathcal J_i^{\rm zp}(x_0)=\frac{\pi}{2}\gamma_0+\pi\Gamma_{\mathcal R}\end{aligned} \end{equation}
Fix bus 31 as the observation bus and vary the disturbance-induced initial-state direction. The three cases are pure phase-angle $\Delta\theta_{31}$, pure voltage $\Delta\log V_{31}$, and a $45^\circ$ coupled direction in the $(\Delta\log V_{31},\Delta\theta_{31})$ plane. All are normalized to the same initial complex-angle-error magnitude.

\begin{figure}[htpb]
\centering
\includegraphics[width=0.95\linewidth]{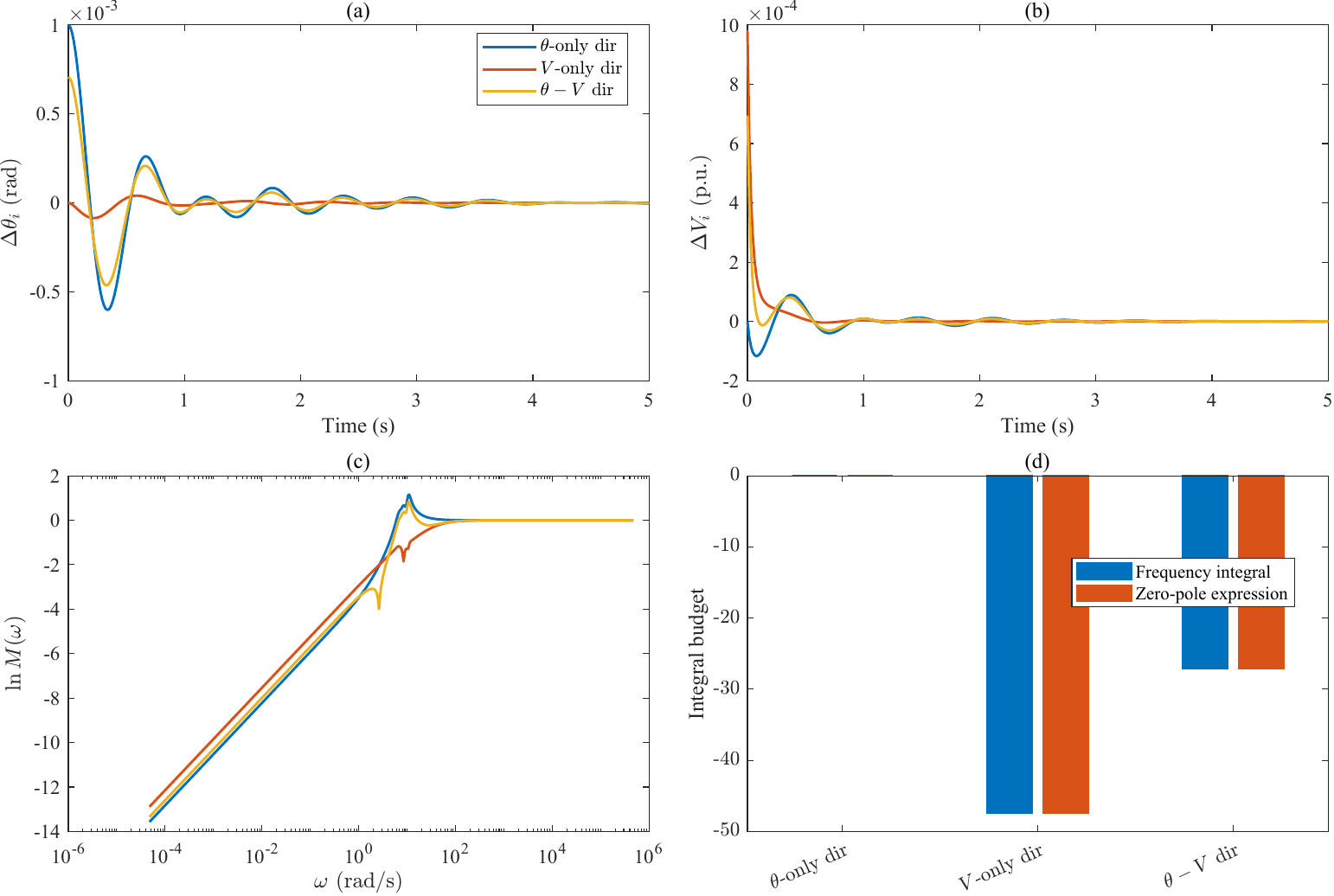}
\caption{Validation of Theorem~\ref{theo: residual integral} for different initial-state directions observed at bus 31. (a) Phase-angle deviations. (b) Voltage deviations. (c) $\log M_{31}(\omega)$. (d) Frequency integral $\mathcal J_{31}^{\rm freq}$ versus zero-pole expression $\mathcal J_{31}^{\rm zp}$.}
\label{fig: theorem1 direction validation}
\end{figure}

Fig.~\ref{fig: theorem1 direction validation}(a) and (b) show distinct phase-angle and voltage transients for the three directions. Their residual peaks also differ in location and magnitude in Fig.~\ref{fig: theorem1 direction validation}(c). Yet Fig.~\ref{fig: theorem1 direction validation}(d) gives $\mathcal J_i^{\rm freq}=\mathcal J_i^{\rm zp}$ in every case, confirming that the identity is not tied to one initial-state direction.

\begin{figure}[htpb]
\centering
\includegraphics[width=0.95\linewidth]{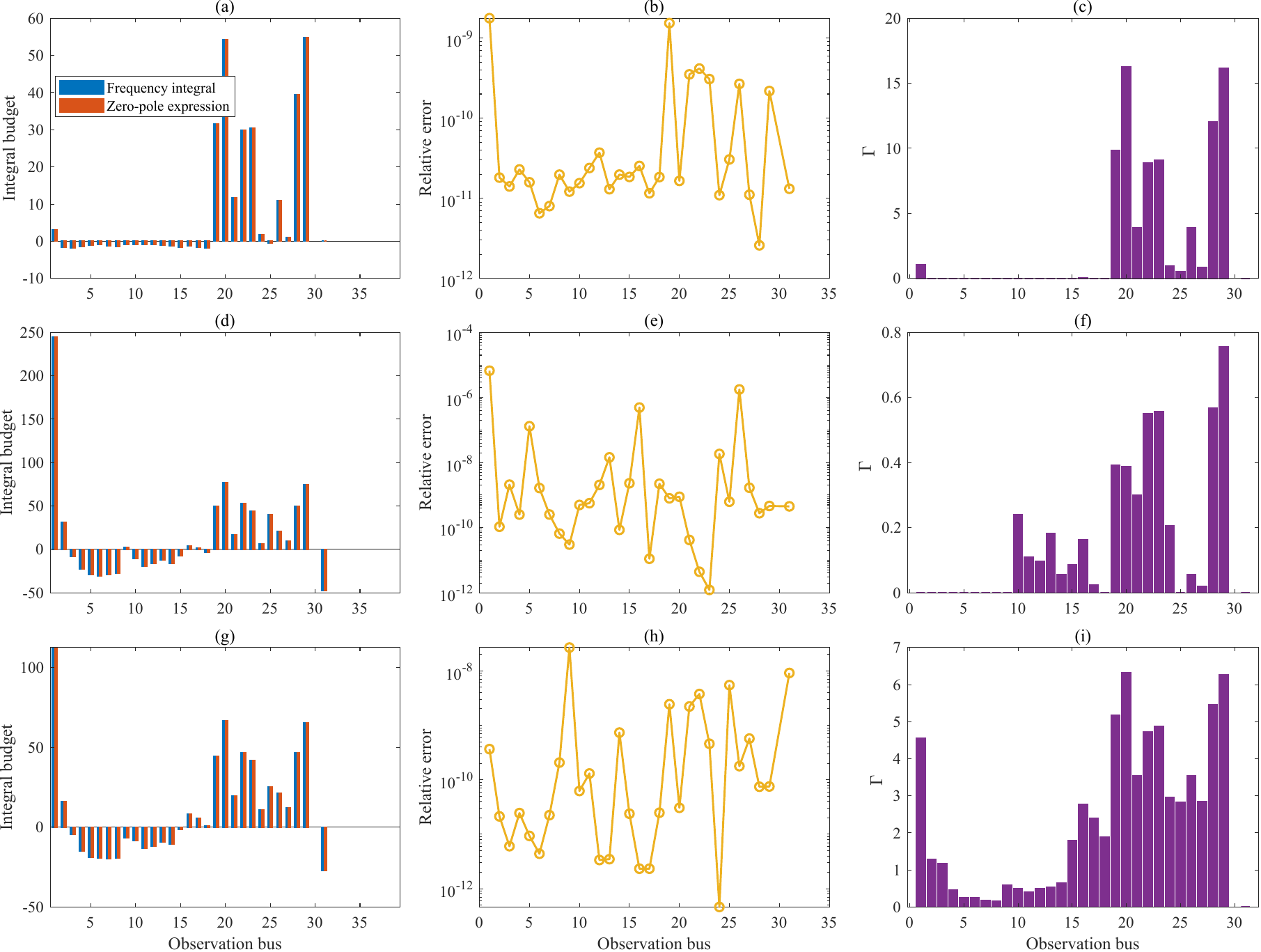}
\caption{Validation of Theorem~\ref{theo: residual integral} across observation buses. Rows correspond to the phase-angle-only, voltage-only, and $45^\circ$ coupled directions. Columns show $\mathcal J_i^{\rm freq}$ versus $\mathcal J_i^{\rm zp}$, their relative discrepancy, and the RHP-zero burden $\Gamma_{z,i}$. Missing entries indicate $C_i x_0=0$, for which the residual is undefined.}
\label{fig: theorem1 bus validation}
\end{figure}

We next fix each disturbance direction and vary the observation bus. Fig.~\ref{fig: theorem1 bus validation} shows wide variation in the integral budget across buses and directions. 
The right column shows that nonzero RHP-zero burdens are common in this inverter-dominated system. Despite these channel-dependent differences, the frequency integral and zero-pole expression agree in all valid cases. Thus, every admissible residual for a specified system and initial state obeys the same structural identity in Theorem~\ref{theo: residual integral}, with its own initial contraction rate and zero burden.

\subsection{Time-Domain Performance Limitation}\label{subsec: theorem3 validation}
We next verify the time-domain limit in Theorem~\ref{theo: frequency activity lower bound}. To isolate the observation-channel effect, fix the $45^\circ$ angle-voltage disturbance direction at bus 31 and observe complex angles at different buses.

\begin{figure}[htpb]
\centering
\includegraphics[width=1.0\linewidth]{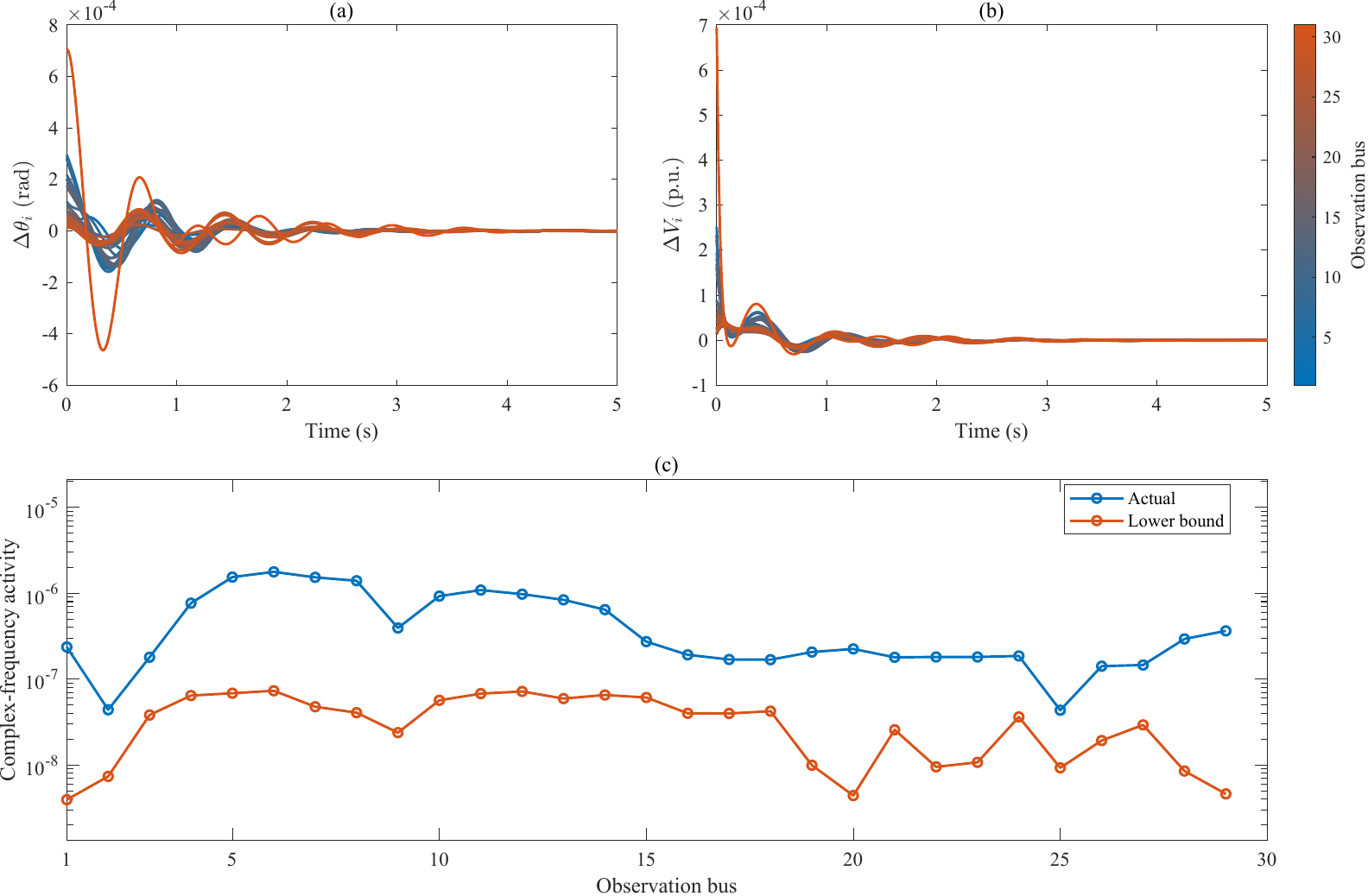}
\caption{Validation of the time-domain limit in Theorem~\ref{theo: frequency activity lower bound}. (a) Phase-angle deviations. (b) Voltage deviations; the color bar in (a)-(b) gives the observation bus. (c) Actual values and lower bounds in \eqref{equ: general frequency activity lower bound}.}
\label{fig: theorem3 bus validation}
\end{figure}
Fig.~\ref{fig: theorem3 bus validation}(a) and (b) show different phase-angle and voltage transients across buses. This heterogeneity also appears in the actual activity and RHP-zero lower bound. Still, Fig.~\ref{fig: theorem3 bus validation}(c) shows $\int_0^\infty|\varpi_i(t)|^2{\rm d}t\geq2|\varphi_i(0)|^2\Gamma_{\mathcal R_i}$ for every valid bus. Thus, an RHP residual zero imposes unavoidable combined voltage-rate and frequency-deviation activity during error removal. The bound identifies a structural recovery cost rather than the full transient waveform.

\subsection{Performance Redistribution Through Damping Tuning}\label{subsec: damping redistribution}
The method has a direct control implication: if tuning preserves the initial contraction rate and RHP-zero burden, it cannot change the logarithmic residual budget. It can only redistribute performance across frequencies.

A simplified swing equation $m_\delta\ddot\delta+d_\delta\dot\delta+k_s\delta=0$ illustrates this mechanism,
where $m_\delta$, $d_\delta$, and $k_s$ are the inertia, damping, and synchronizing coefficients. Its residual is
\begin{equation}\label{equ: reduced damping residual}
\begin{aligned}
    \mathcal R_d(s)
    ={s(m_\delta s+d_\delta)}/
    {(m_\delta s^2+d_\delta s+k_s)}.
\end{aligned}
\end{equation}
All poles and the nonstructural zero $-d_\delta/m_\delta$ lie in the LHP. Also, $\gamma_0=\dot\delta(0)/\delta_0=0$, so $\mathcal J=0$. The low-frequency residual satisfies
\begin{equation}\label{equ: reduced damping low frequency}
\begin{aligned}
    \mathcal R_d(j\omega)
    =j\omega\frac{d_\delta}{k_s}+{o}(\omega),\,\omega\to0,\quad
    \frac{1}{\delta_0}\int_0^\infty\delta(t){\rm d}t
    =\frac{d_\delta}{k_s}.
\end{aligned}
\end{equation}
Thus, stronger damping can suppress the oscillatory peak while raising the low-frequency residual envelope and accumulated angle error \cite{Peng_Stability_2026}. This reduced model gives a power-system interpretation of the redistribution in Theorem~\ref{theo: residual integral}.

For verification, scale the damping coefficient of the VSG at bus 31 by $\rho_d\in[1,4]$, with other settings fixed. The representative values are $\rho_d=1$, $1.7$, $2.5$, and $4$. All selected cases are stable, satisfy $\gamma_0=0$, and have no RHP residual zero; hence $\mathcal J=0$ throughout the scan.

\begin{figure}[htpb]
\centering
\includegraphics[width=1.0\linewidth]{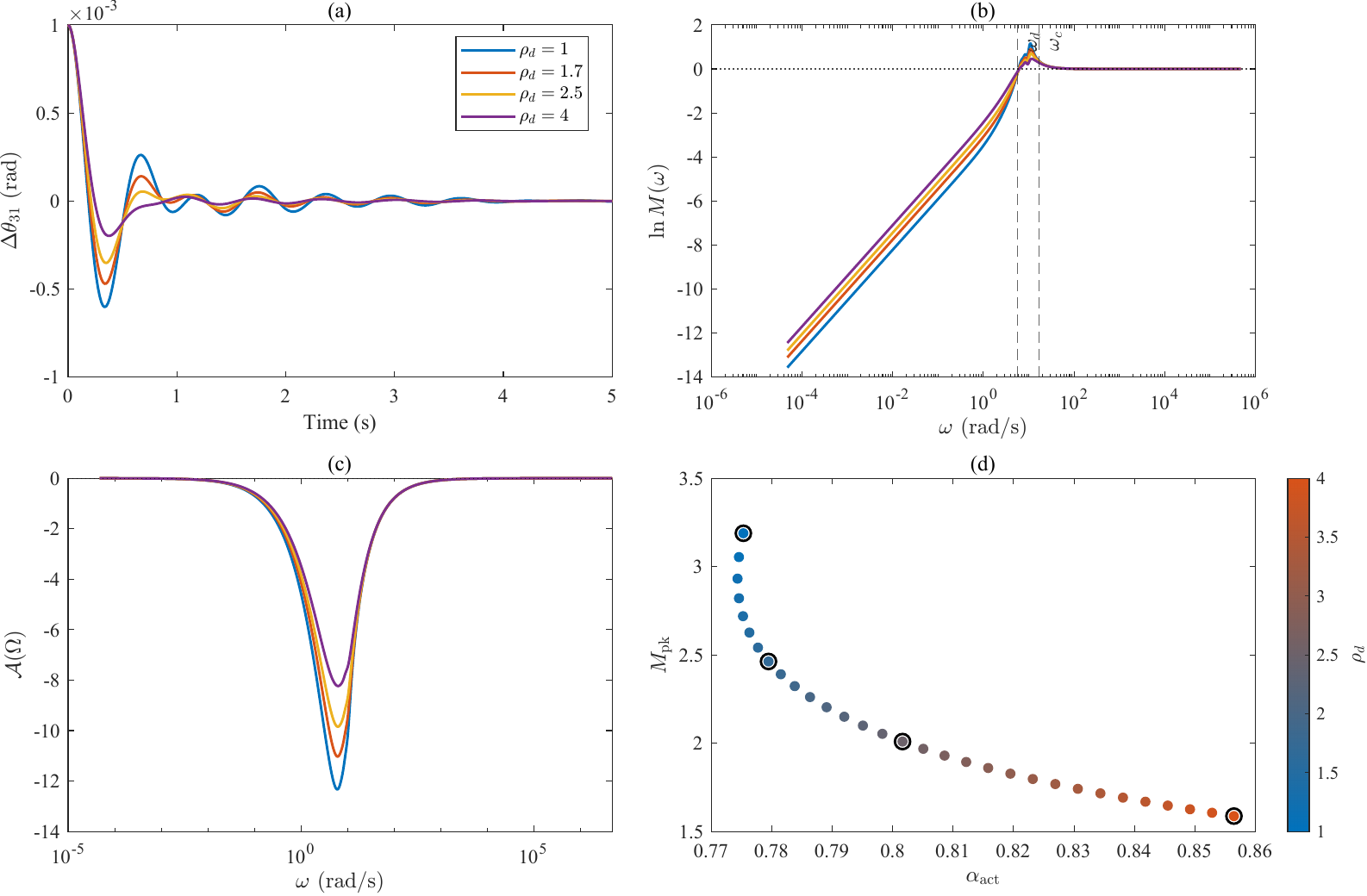}
\caption{{Synchronization recovery performance redistribution under VSG damping tuning at bus 31.} (a) Phase-angle recovery. (b) Increasing $\rho_d$ suppresses $M_{\rm pk}$ but raises the low-frequency residual. (c) Cumulative areas $\mathcal A_{\rho_d}(\Omega)=\int_0^\Omega\log M_{\rho_d}(\omega){\rm d}\omega$ converge to the same zero budget. (d) Trade-off between $\alpha_{\rm act}$ and $M_{\rm pk}$; circles mark the cases in (a)-(c).}
\label{fig: damping redistribution}
\end{figure}

Fig.~\ref{fig: damping redistribution}(a) shows that increasing $\rho_d$ reduces phase-angle oscillations. Fig.~\ref{fig: damping redistribution}(b), however, shows that the improvement is not uniform: the intermediate-frequency peak falls while the low-frequency part of $\log M(\omega)$ rises, consistent with \eqref{equ: reduced damping low frequency}. In Fig.~\ref{fig: damping redistribution}(c), the cumulative areas differ in the intermediate range but all return to about zero at high frequency. Damping tuning therefore shifts logarithmic area from oscillatory to low-frequency recovery rather than removing the constraint.

To quantify this trade-off, define the following quantities
\begin{equation}\label{equ: actual low frequency envelope}
    \alpha_{\rm act}(\rho_d)
    =\sup_{0<\omega\leq\omega_d}
    \frac{\omega_d}{\omega}M_{\rho_d}(\omega),
    \;
    M_{\rm pk}(\rho_d)
    =\sup_{\omega_d\leq \omega\leq\omega_c}M_{\rho_d}(\omega).
\end{equation}
Fig.~\ref{fig: damping redistribution}(d) shows that reducing $M_{\rm pk}$ increases $\alpha_{\rm act}$. Stronger damping therefore suppresses the oscillatory peak at the cost of greater low-frequency synchronization error. {In other words, synchronization recovery performance cannot improve uniformly while the recovery budget remains unchanged.}

To directly validate the finite-frequency bound in Theorem~\ref{theo: finite frequency peak}, we use the same four damping cases and a common frequency band $\omega_d\leq\omega\leq\omega_c$ with $k=1$.
For all four cases, $\gamma_0=0$ and no RHP residual zero exists, so $\mathcal J=0$.
The low-frequency parameter increases from $\alpha=0.4279$ to $0.6471$ as $\rho_d$ increases, while the high-frequency parameter remains in the narrow range $\delta=0.0539$--$0.0560$.
Thus, $\alpha<1$ and $\delta<1/2$ hold in every case, and the numerical checks confirm both envelope conditions without positive violation.

\begin{figure}[htpb]
\centering
\includegraphics[width=\linewidth]{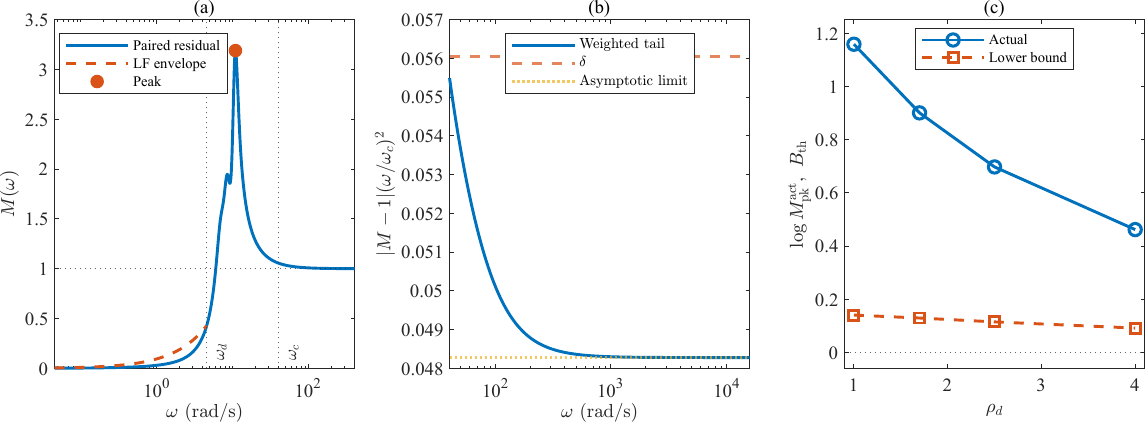}
\caption{Numerical validation of Theorem~\ref{theo: finite frequency peak} under VSG damping tuning at bus 31. (a) Representative paired residual magnitude for $\rho_d=1$, together with the low-frequency envelope and the common boundaries $\omega_d=4.5309$ rad/s and $\omega_c=40.1581$ rad/s. (b) Weighted high-frequency tail $|M(\omega)-1|(\omega/\omega_c)^2$ and its envelope. (c) Actual intermediate-frequency peaks and their theoretical lower bounds for the four damping cases.}
\label{fig: theorem2 peak bound validation}
\end{figure}

Denote $B_{\rm th}$ by the right-hand side of \eqref{equ: peak lower bound} in Theorem~\ref{theo: finite frequency peak}.
In this case, the resulting theoretical bounds are positive for all four cases. For $\rho_d=1$, $1.7$, $2.5$, and $4$, respectively, Theorem~\ref{theo: finite frequency peak} gives
$e^{B_{\rm th}}=1.1507$, $1.1380$, $1.1216$, and $1.0958$, whereas the actual peaks are
$M_{\rm pk}=3.1887$, $2.4628$, $2.0094$, and $1.5874$.
Hence, every actual peak lies above its predicted lower bound, as shown in Fig.~\ref{fig: theorem2 peak bound validation}(c).
Because $B_{\rm th}>0$ in all cases, the theorem predicts $M_{\rm pk}>1$ before the intermediate-frequency peak is evaluated.
Increasing damping lowers both the actual peak and its unavoidable lower bound, while increasing the low-frequency envelope.
This quantitatively confirms the peak--bandwidth trade-off and complements the redistribution mechanism observed in Fig.~\ref{fig: damping redistribution}.

\subsection{Validation with the Nonlinear DAE Model}
\label{subsec: nonlinear trajectory validation}

We further test the complete nonlinear DAE model under a short-circuit fault at bus~16. The fault is applied at $t=-0.10$~s and cleared at $t=0$~s. Fig.~\ref{fig: exp7 nonlinear trajectory}(a) shows the complex-angle error, while Fig.~\ref{fig: exp7 nonlinear trajectory}(b) shows its contraction toward the origin.

\begin{figure}[htpb]
    \centering
    \includegraphics[width=\linewidth]
    {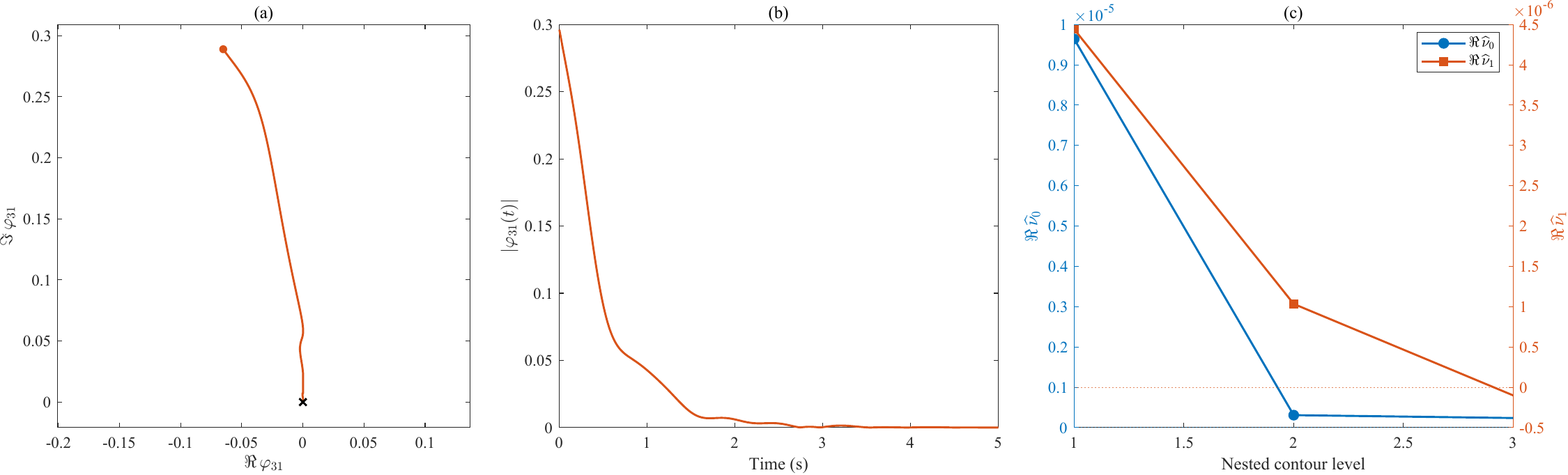}
    \caption{Validation using the nonlinear post-fault response of the 31st bus under a short-circuit fault.
    (a) Recovery trajectory of $\varphi_1(t)$ in the complex-angle plane.
    (b) Time-domain decay of $|\varphi_1(t)|$.
    (c) Zeroth- and first-order moments over three progressively enlarged RHP contours.}
    \label{fig: exp7 nonlinear trajectory}
\end{figure}

Following Section~\ref{subsec: trajectory based computation}, use three nested regions $\mathcal D_{\epsilon_r,S_r,\Omega_r}$ in \eqref{equ: D_r search region}: $(\epsilon_r,S_r,\Omega_r)=(10^{-3},2,10)$, $(5\times10^{-4},4,20)$, and $(2.5\times10^{-4},8,40)$. In Fig.~\ref{fig: exp7 nonlinear trajectory}(c), neither the first nor second domain encloses an RHP zero. The moments remain unchanged after expansion to the third domain, providing a convergence check. At the accepted third level, $\widehat{\nu}_0=2.373141757\times10^{-7}-j1.630704867\times10^{-8}$ and $\widehat{\nu}_1=-1.010914073\times10^{-7}-j1.851216607\times10^{-7}$. Thus, $\widehat{\nu}_0\approx0$ indicates no enclosed RHP zero and $\widehat{\Gamma}_{\mathcal R}=\Re(\widehat{\nu}_1)=-1.010914\times10^{-7}\approx0$. Combined with $\gamma_0$, this gives $\mathcal J^{\rm freq}=-1.929765$ and $\mathcal J^{\rm zp}=-1.918402$, with a $0.592\%$ normalized closure error from truncation.

\section{Conclusion}\label{sec: conclusion}

This paper establishes a conservation-based view of power-system synchronization recovery in complex frequency representation. The main result shows that the recovery performance is constrained by a logarithmic budget determined by the initial contraction and residual zero--pole structure, which are physically determined by system structures and initial states. 
Consequently, controller tuning generally redistributes synchronization degradation rather than removing it, leading to unavoidable frequency-domain and time-domain tradeoffs. The system-level extension further connects this result with classical Bode-type sensitivity conservation, which appears as a zero-budget special case. Studies on the heterogeneous IEEE 39-bus system confirm these mechanisms for different initial conditions, observation buses, damping settings, and nonlinear post-fault responses. These results suggest that synchronization control should focus on redistributing the constrained recovery cost toward less critical frequencies or channels, or on changing the underlying structural budget itself.

\section*{Acknowledgment}
The authors used OpenAI ChatGPT to assist with the coding of numerical validation and with language editing throughout the manuscript. All AI-assisted code and text were reviewed and verified by the authors.

\bibliographystyle{IEEEtran}
\bibliography{references}

\clearpage
\appendices
\section{Proofs of Main Results}\label{appendix: proofs}
For completeness, the proofs of the results stated in the main text are collected below.

\subsection{Proof of Proposition~\ref{prop: residual frequency representation}}\label{app:proof:residual-correction}

\begin{proof}
Since $\dot\varphi=\varpi$, integration by parts in \eqref{equ: Phi definition} gives $s\Phi(s)=\varphi(0)+\Omega(s)$. Division by $\varphi(0)$ proves \eqref{equ: correction function}. Equation \eqref{equ: residual definition} gives $\mathcal R(s)/s=\Phi(s)/\varphi(0)$, whose Laplace transform yields \eqref{equ: residual and correction step response}. Since $\mathcal C=1-\mathcal R$, ${\mathcal C(s)}/{s}={1}/{s}-{\Phi(s)}/{\varphi(0)}$, as in \eqref{equ: correction function}. Finally, $\varpi\in L^1$ and $\varphi(t)\to0$ imply $\Omega(0)=\int_0^\infty\varpi(t){\rm d}t
    =\varphi(\infty)-\varphi(0)=-\varphi(0)$.
Thus, \eqref{equ: correction function} gives $\mathcal R(0)=0$ and $\mathcal C(0)=1$. The Laplace transform of an $L^1$ function vanishes as $|s|\to\infty$ in the closed RHP, so $\Omega(s)\to0$, $\mathcal R(s)\to1$, and $\mathcal C(s)\to0$.
\end{proof}

\subsection{Proof of Theorem~\ref{theo: residual integral}}\label{app:proof:residual-integral}

\begin{proof}
For $a\in\mathbb C_+$, define the Blaschke factor $B_a(s)={(s-a)}/{(s+\overline a)}$ \cite{Seron_Fundamental_1997}.
For $s=j\omega$, the numerator and denominator in the Blaschke factor have the same magnitude, and therefore $|B_a(j\omega)|=1$. Moreover,
\begin{equation}\label{equ: blaschke asymptote}
\begin{aligned}
    B_a(s)
    =\left(1-\frac{a}{s}\right)
      \left(1+\frac{\overline a}{s}\right)^{-1}
    =1-\frac{2\Re(a)}{s}+o(|s|^{-1}).
\end{aligned}
\end{equation}

Construct
\begin{equation}\label{equ: zero free H}
    H(s)=\mathcal R(s)
    \frac{\prod_i B_{p_i}(s)}{\prod_i B_{z_i}(s)}.
\end{equation}
By construction, $H$ is analytic and zero-free in $\mathbb C_+$, and the unit-modulus property gives $|H(j\omega)|=|\mathcal R(j\omega)|$.

Using \eqref{equ: analytic asymptote} and \eqref{equ: blaschke asymptote}, one obtains
\begin{equation}\label{equ: H asymptote}
    H(s)=1+\frac{h_1}{s}+o(|s|^{-1}),
    \qquad h_1=a_1+2\Gamma_{\mathcal R}.
\end{equation}
Since $\mathbb C_+$ is simply connected and $H$ is zero-free, $H$ admits an analytic logarithm in $\mathbb C_+$. Choose the branch satisfying $\log H(s)\rightarrow0$ as $s\rightarrow\infty$. Equation \eqref{equ: H asymptote} then yields
\begin{equation}\label{equ: log H asymptote}
    \log H(s)=\frac{h_1}{s}+o(|s|^{-1}).
\end{equation}

Consider the clockwise boundary of the right half-disk of radius $R$, with a RHP indentation of radius $\epsilon$ around the structural zero at the origin. Near the origin, $H(s)=s^m\widetilde H(s)$ for some integer $m\geq1$ and analytic $\widetilde H(0)\neq0$. Therefore, there is a constant $C>0$ such that $|\log H(s)|\leq C(1+|\log\epsilon|)$ on the indentation. Since its length is $\pi\epsilon$, the magnitude of its contribution is bounded by $\pi C\epsilon(1+|\log\epsilon|)$, which tends to zero as $\epsilon\to0$.

Now, Cauchy's theorem is applicable to $\log H$ on the indented domain, i.e., $\oint \log H(\omega){\rm d}\omega=0$. Along the imaginary-axis segments, $s=j\omega$ and ${\rm d}s=j{\rm d}\omega$, giving
\begin{equation}
    j\left(\int_{-R}^{-\epsilon}+\int_{\epsilon}^{R}\right)
    \log H(j\omega){\rm d}\omega.
\end{equation}
For the $R$-radius large semicircle, we parameterize it clockwise by $s=R\exp(j\phi)$, where $\phi$ decreases from $\pi/2$ to $-\pi/2$. Then, \eqref{equ: log H asymptote} yields
\begin{equation}
\begin{aligned}
    \int_{C_R}\log H(s){\rm d}s
    &=\int_{\pi/2}^{-\pi/2}
      \left(\frac{h_1}{Re^{j\phi}}+o(R^{-1})\right)
      jRe^{j\phi}{\rm d}\phi\\
    &=-j\pi h_1+o(1).
\end{aligned}
\end{equation}
Letting $\epsilon\rightarrow0$ and then $R\rightarrow\infty$ in the contour identity gives
\begin{equation}\label{equ: complex log integral}
    \operatorname{PV}\int_{-\infty}^{\infty}
    \log H(j\omega){\rm d}\omega=\pi h_1.
\end{equation}
Taking real parts in \eqref{equ: complex log integral}, using $|H(j\omega)|=|\mathcal R(j\omega)|$, and substituting $h_1=a_1+2\Gamma_{\mathcal R}$ and \eqref{equ: residual budget definition} prove \eqref{equ: bilateral integral theorem}.

Moreover,
\begin{equation}
    \frac{{\rm d}}{{\rm d}t}\log|\varphi(t)|
    =\frac{\Re\{\overline{\varphi(t)}\varpi(t)\}}{|\varphi(t)|^2}
    =\Re\left(\frac{\varpi(t)}{\varphi(t)}\right).
\end{equation}
Combining it with \eqref{equ: a1}, its value at $t=0^+$ is $\Re(a_1)=\gamma_0$. Substitution into \eqref{equ: residual budget definition} proves \eqref{equ: physical integral theorem}.
\end{proof}

\subsection{Proof of Corollary~\ref{cor: stable integral}}\label{app:proof:stable-response}

\begin{proof}
The Laplace transform of an $L^1$ signal is analytic in $\mathbb C_+$. Multiplication by $s/\varphi(0)$ introduces no RHP pole, so the pole sum in \eqref{equ: gamma zp definition} is empty. \eqref{equ: stable residual integral} then follows from \eqref{equ: residual budget definition}.
\end{proof}

\subsection{Proof of Proposition~\ref{prop: zero cancellation}}\label{app:proof:weighted-cancellation}

\begin{proof}
For $z\in\mathbb C_+$, \eqref{equ: Phi definition} converges, and thus \eqref{equ: residual definition} gives $\mathcal R(z)=z\Phi(z)/\varphi(0)$. Since $z\neq0$, $\mathcal R(z)=0$ if and only if $\Phi(z)=0$. Substituting $z=\sigma+j\omega_z$ into \eqref{equ: Phi definition} completes the proof.
\end{proof}

\subsection{Proof of Proposition~\ref{prop: lti initial condition representation}}\label{app:proof:lti-initial-condition}

\begin{proof}
The system dynamics give $x(t)=\exp(At)x_0$ and hence $\Phi_i(s)$ in \eqref{equ: residual lti realization}. Using $s(sI-A)^{-1}=I+A(sI-A)^{-1}$ and $\varphi_i(0)=C_ix_0$ in \eqref{equ: residual definition} gives $\mathcal R_i(s)$ in \eqref{equ: residual lti realization}. For $z\neq0$, $\mathcal R_i(z)=0$ if and only if $\Phi_i(z)=0$. The pole statement follows after removing uncontrollable, unobservable, and exactly cancelled modes.
\end{proof}

\subsection{Proof of Proposition~\ref{cor: conventional transfer zeros}}\label{app:proof:transfer-zeros}

\begin{proof}
An impulse through $B_dq$ gives $x(0^+)=B_dq$ and $\varphi_i(t)=C_i\exp(At)B_dq$. Its Laplace transform is $G_{i,q}(s)=C_i(sI-A)^{-1}B_dq$, proving \eqref{equ: residual from physical channel}. The nonzero scale factor $1/(C_iB_dq)$ does not change zeros or poles, while multiplication by $s$ adds the structural zero at the origin. Exact cancellations are removed by minimal realization.
\end{proof}

\subsection{Proof of Theorem~\ref{theo: finite frequency peak}}\label{app:proof:finite-frequency-peak}

\begin{proof}
For $|w|<1/2$, $|\log(1+w)|\leq3|w|/2$ \cite{Peng_Stability_2026}. Substituting this fact yields
\begin{equation}
\begin{aligned}
|\log M(\omega)|\leq\frac32|M(\omega)-1|\leq\frac{3\delta}{2}
\left(\frac{\omega_c}{\omega}\right)^{1+k}
\end{aligned}
\end{equation}
for $\omega\geq\omega_c$.
Therefore,
\begin{equation}\label{equ: tail area bound}
    \left|\int_{\omega_c}^{\infty}
    \log M(\omega){\rm d}\omega\right|
    \leq\frac{3\delta\omega_c}{2k}=\beta_c.
\end{equation}

Split \eqref{equ: positive frequency budget} over $[0,\omega_d]$, $(\omega_d,\omega_c)$, and $[\omega_c,\infty)$. The low-frequency condition, $M_{\rm pk}$, and \eqref{equ: tail area bound} give
\begin{equation}\label{equ: freq-wise integral of logM}
    \begin{aligned}
        &\int_0^{\omega_d}\log M(\omega){\rm d}\omega
    \leq\omega_d(\log\alpha-1),\, \int_{\omega_c}^{\infty}\log M(\omega){\rm d}\omega
    \leq\beta_c.\\
    &\int_{\omega_d}^{\omega_c}\log M(\omega){\rm d}\omega
    \leq(\omega_c-\omega_d)\log M_{\rm pk},\\
    \end{aligned}
\end{equation}
Adding these inequalities and using \eqref{equ: positive frequency budget} gives $\mathcal J\leq\omega_d(\log\alpha-1)
    +(\omega_c-\omega_d)\log M_{\rm pk}+\beta_c$, proving \eqref{equ: peak lower bound}.
\end{proof}

\subsection{Proof of Corollary~\ref{cor: bandwidth lower bound}}\label{app:proof:bandwidth}

\begin{proof}
Under the prescribed bounds,
\begin{equation}
\mathcal J
\leq\kappa\omega_c(\log\alpha-1)
 +(1-\kappa)\omega_c\log\overline M
 +\frac{3\delta\omega_c}{2k}=\omega_c d_{\rm bw}.
\end{equation}
This proves \eqref{equ: bandwidth budget condition} when $d_{\rm bw}>0$. If $d_{\rm bw}<0$, then $\omega_c d_{\rm bw}<0$ for every $\omega_c>0$, contradicting $\mathcal J>0$.
\end{proof}

\subsection{Proof of Theorem~\ref{theo: frequency activity lower bound}}\label{app:proof:activity}

\begin{proof}
For brevity, define $G(s)=\mathcal R(s)-1=\int_0^\infty \exp({-st})g(t){\rm d}t$.
For every complex number $u$ such that $1+u\neq0$,
\begin{equation}\label{equ: logarithm quadratic inequality}
\log|1+u|=\frac12\log(1+2\Re(u)+|u|^2)
\leq\Re(u)+\frac12|u|^2,
\end{equation}
Applying \eqref{equ: logarithm quadratic inequality} to $u=G(j\omega)$ and integrating over the entire imaginary axis gives
\begin{equation}\label{equ: integrated logarithm inequality}
\begin{aligned}
&\operatorname{PV}\int_{-\infty}^{\infty}
\log|\mathcal R(j\omega)|{\rm d}\omega\\
&\quad\leq
\operatorname{PV}\int_{-\infty}^{\infty}\Re G(j\omega){\rm d}\omega
+\frac12\int_{-\infty}^{\infty}|G(j\omega)|^2{\rm d}\omega.
\end{aligned}
\end{equation}

Since $g$ is continuous and $\dot g\in L^1$, the one-sided Fourier inversion formula gives
\begin{equation}\label{equ: initial value inversion}
    \operatorname{PV}\int_{-\infty}^{\infty}G(j\omega){\rm d}\omega
    =\pi g(0^+).
\end{equation}
Since $g(0^+)=\varpi(0^+)/\varphi(0)$, the real part of \eqref{equ: initial value inversion} is $\pi\gamma_0$.
Substituting \eqref{equ: bilateral integral theorem}, \eqref{equ: residual budget definition}, and \eqref{equ: initial value inversion} into \eqref{equ: integrated logarithm inequality} gives
\begin{equation}
    \int_{-\infty}^{\infty}|G(j\omega)|^2{\rm d}\omega
    \geq4\pi\Gamma_{\mathcal R}.
\end{equation}
If $\Gamma_{\mathcal R}<0$, the non-negative left-hand side indicates the positive-part form. Parseval's identity yields
\begin{equation}
\begin{aligned}
    \int_{-\infty}^{\infty}|G(j\omega)|^2{\rm d}\omega
    =2\pi\int_0^\infty|g(t)|^2{\rm d}t=\frac{2\pi}{|\varphi(0)|^2}
      \int_0^\infty|\varpi(t)|^2{\rm d}t,
\end{aligned}
\end{equation}
Combining both equations completes the proof.
\end{proof}

\subsection{Proof of Theorem~\ref{theo: global residual integral}}\label{app:proof:global-integral}

\begin{proof}
Equations \eqref{equ: nonlinear variational flow}-\eqref{equ: nonlinear differential synchronization error} give $\dot{\bm{\varphi}}(0^+)=(\dot C_0+C_0A_0)B_0$. Entry-wise integration by parts in \eqref{equ: nonlinear matrix residual definition} yields
$\boldsymbol{\mathcal R}(s)=I+\dot{\bm{\varphi}}(0^+)\bm{\varphi}_0^{-1}/s+o(|s|^{-1})$.
Therefore, $\Delta_{\mathcal R}(s)=1+\operatorname{tr}[\dot{\bm{\varphi}}(0^+)\bm{\varphi}_0^{-1}]/s+o(|s|^{-1})$. Applying Thm.~\ref{theo: residual integral} to $\Delta_{\mathcal R}$ and noticing $\log|\det({\mathcal R})|=\sum_i\log|\lambda_i({\mathcal R})|$ prove \eqref{equ: global bilateral integral}.
\end{proof}

\subsection{Proof of Corollary~\ref{cor: global lti residual integral}}\label{app:proof:global-lti}

\begin{proof}
For the LTI model, $C(t)=C$ and $\Psi(t)=\exp({At})$, so \eqref{equ: global lti residual} follows from \eqref{equ: nonlinear differential synchronization error}-\eqref{equ: nonlinear matrix residual definition}. Since $\Delta_{\mathcal R}(\infty)=1$, normalized pole-zero factorization gives $\gamma_0=\Re(\sum_i p_i-\sum_i z_i)$. Combining this relation with \eqref{equ: global bilateral integral} gives \eqref{equ: global lti pole-zero integral}. If $n=n_x$, then $B_0(CB_0)^{-1}=C^{-1}$. The only zeros are structural zeros at the origin, and the poles are $\lambda_i(A)$, giving \eqref{equ: global lti full-state integral}.
\end{proof}

\section{Numerical Evaluation of Residual Integral Based on Nonlinear Trajectories }
\label{appendix: nonlinear trajectory frequency integral}

This part details the numerical evaluation of the frequency-side residual budget in the nonlinear fault-clearing experiment conduction in Section~\ref{subsec: nonlinear trajectory validation}. For a sampled post-fault trajectory $\varphi(t)$, define
\begin{equation}\label{equ: appendix paired log residual}
\begin{aligned}
    \ell(\omega)&=\log M(\omega)
    =\frac{1}{2}\log\left|
    \mathcal R(j\omega)\mathcal R(-j\omega)
    \right|,\\
    \mathcal J^{\rm freq}&=\int_0^\infty\ell(\omega){\rm d}\omega.
\end{aligned}
\end{equation}
The positive- and negative-frequency branches are evaluated separately because the complex-valued trajectory does not generally possess conjugate symmetry.

\subsection{Direct Evaluation over the Resolved Frequency Range}

Conduct the transient simulation until $|\varphi(T)|/|\varphi(0)|$ is sufficiently small.
For the samples $\varphi_n=\varphi(t_n)$, 
the finite-horizon Laplace transform is evaluated by composite trapezoidal quadrature with sufficient time interval $\Delta t_n=t_{n+1}-t_n$ as 
\begin{equation}\label{equ: appendix finite trajectory laplace}
\widehat\Phi_T(s)
=\sum_{n=0}^{N-1}\frac{\Delta t_n}{2}
\left[
    e^{-st_n}\varphi_n
    +e^{-st_{n+1}}\varphi_{n+1}
\right].
\end{equation}
The corresponding paired residual samples are
\begin{equation}\label{equ: appendix sampled paired residual}
\widehat{\mathcal R}(\pm j\omega)
=\frac{\pm j\omega\widehat\Phi_T(\pm j\omega)}{\varphi(0)},
\quad
\widehat\ell(\omega)
=\frac{1}{2}\log\left|
\widehat{\mathcal R}(j\omega)
\widehat{\mathcal R}(-j\omega)
\right|.
\end{equation}

Split the frequency spectrum into three parts, $[0,\omega_{\min}]$, $(\omega_{\min},\omega_{\max}]$, and $(\omega_{\max},\infty)$.
These parts are computed in different approximations, as shown below.

In numerical verification, since the structural zero gives $M(\omega)=c_0\omega+\mathcal O(\omega)$ as $\omega\to0^+$, the interval below $\omega_{\min}$ satisfies
\begin{equation}\label{equ: appendix low frequency correction}
\int_0^{\omega_{\min}}\ell(\omega){\rm d}\omega
\approx
\omega_{\min}\bigl[\widehat\ell(\omega_{\min})-1\bigr].
\end{equation}

For the medium band $(\omega_{\min},\omega_{\max}]$, $\widehat\ell$ is evaluated on a logarithmic grid beginning at $\omega_{\min}=10^{-5}$~rad/s with 160 points per decade. The directly resolved contribution is calculated by trapezoidal integration in $\omega$. 
Consequently, the directly computed integral of the finite-frequency parts $[0, \omega_{\max}] = [0,\omega_{\min}]\cup(\omega_{\min},\omega_{\max}]$ is
\begin{equation}\label{equ: appendix direct frequency integral}
\widehat{\mathcal J}_{\rm grid}(\omega_{\max})
=
\omega_{\min}\bigl[\widehat\ell(\omega_{\min})-1\bigr]
+
\int_{\omega_{\min}}^{\omega_{\max}}
\widehat\ell(\omega){\rm d}\omega.
\end{equation}

\subsection{High-Frequency Tail Estimate}

The remaining integral above $(\omega_{\max},\infty)$ is estimated independently from the local initial behavior of the nonlinear trajectory. A fourth-order polynomial is fitted to the first $2$~ms of $\varphi(t)$, from which $\dot\varphi(0^+)$ and $\ddot\varphi(0^+)$ are obtained. Define
\begin{equation}\label{equ: appendix high frequency coefficients}
    a_1=\frac{\dot\varphi(0^+)}{\varphi(0)},
    \qquad
    a_2=\frac{\ddot\varphi(0^+)}{\varphi(0)}.
\end{equation}
The initial-value expansion of the residual is then
\begin{equation}\label{equ: appendix high frequency residual expansion}
    \mathcal R(s)
    =1+\frac{a_1}{s}+\frac{a_2}{s^2}
    +{\mathcal O}(|s|^{-3}).
\end{equation}
Pairing the two imaginary-axis branches cancels the first-order terms and yields
\begin{equation}\label{equ: appendix high frequency log expansion}
\ell(\omega)
=
\frac{c_{\rm hf}}{\omega^2}
+{\mathcal O}(\omega^{-4}),
\qquad
c_{\rm hf}
=\frac{1}{2}\Re\left(a_1^2-2a_2\right).
\end{equation}
Hence, the unresolved high-frequency contribution is approximated by
\begin{equation}\label{equ: appendix high frequency tail}
\widehat{\mathcal J}_{\rm tail}(\omega_{\max})
=
\int_{\omega_{\max}}^\infty\ell(\omega){\rm d}\omega
\approx
\frac{c_{\rm hf}}{\omega_{\max}}.
\end{equation}
This estimate uses only the initial local derivatives of the nonlinear trajectory and is not obtained by forcing agreement with the zero-pole expression in \eqref{equ: physical integral theorem}.

The upper integration limit is initialized at $80$~rad/s and enlarged by a factor of $1.5$, subject to the sampling-based safety limit
$\omega_{\max}\leq0.04\pi/\Delta t_{\max}$. At each extension, the total estimate
\begin{equation}\label{equ: appendix total nonlinear frequency budget}
\widehat{\mathcal J}^{\rm freq}
=
\widehat{\mathcal J}_{\rm grid}(\omega_{\max})
+
\widehat{\mathcal J}_{\rm tail}(\omega_{\max})
\end{equation}
is recomputed. The calculation is accepted when two successive total estimates differ by no more than
$0.03+0.03\max\{1,|\widehat{\mathcal J}^{\rm freq}|\}$ and the last $5\%$ of the resolved frequency samples satisfy
\begin{equation}\label{equ: appendix tail shape check}
\frac{\operatorname{median}
\left|\omega^2\widehat\ell(\omega)-c_{\rm hf}\right|}
{\max\{1,|c_{\rm hf}|\}}
\leq0.15.
\end{equation}
Thus, the reported nonlinear frequency budget combines a directly integrated low-to-high-frequency contribution with an independently estimated asymptotic tail. 

\end{document}